\documentclass[11pt]{article}

\usepackage[ngerman, english]{babel}
\usepackage{amsmath,amssymb,amsthm,stmaryrd,cancel,dsfont,scalerel}
\usepackage{bbm}
\usepackage{mathdots}

\RequirePackage[round]{natbib}

\usepackage{tikz}
\usetikzlibrary{matrix,arrows,shapes,decorations.shapes}
\usetikzlibrary{decorations.pathmorphing}

\usetikzlibrary{spy}
\usetikzlibrary{backgrounds}
\usetikzlibrary{decorations}
\usetikzlibrary{patterns}

\pgfdeclarelayer{background layer}
\pgfsetlayers{background layer,main}
\usepackage{xcolor,paralist,graphicx}

\usepackage{geometry}
\usepackage{stmaryrd}
\usepackage[latin1]{inputenc}
\usepackage{paralist}
\usepackage{array}
\usepackage{multirow,booktabs}
\usepackage{relsize}
\usepackage{hyperref, aliascnt}
\usepackage{subcaption}
\usepackage{algorithm}
\usepackage{algorithmic}
\usepackage{eqparbox}
\usepackage{float}

\newcommand{\E}{{\mathbb E}}
\newcommand{\F}{{\widetilde F}}
\newcommand{\N}{{\mathbb N}}
\newcommand{\Q}{{\mathbb Q}}
\newcommand{\R}{{\mathbb R}}

\newcommand{\G}{{\mathcal G}}
\newcommand{\T}{{\widetilde T}}

\newcommand{\xit}{{\widetilde \xi}}
\newcommand{\Lt}{{\widetilde L}}
\renewcommand{\O}{{\mathcal O}}
\renewcommand{\S}{{\mathbb S}}

\theoremstyle{definition}
\newtheorem{defn}{Definition}[section]
\newtheorem{rem}[defn]{Remark}

\newtheorem{spc}[defn]{Specification}

\newtheorem{prop}[defn]{Proposition}
\newtheorem{thm}[defn]{Theorem}

\newcommand{\sref}[2]{\hyperref[#2]{#1 \ref*{#2}}}

\hypersetup{
    colorlinks=true,
    linkcolor={blue!80!black},
    filecolor=cyan,      
    urlcolor=magenta,
    citecolor={green!70!black},
}

\usepackage{enumitem}
\setlist[enumerate]{leftmargin=.5in}
\setlist[itemize]{leftmargin=.5in}

\usepackage{amsopn}
\DeclareMathOperator{\diag}{diag}

\DeclareMathOperator*{\Bigcdot}{\scalerel*{\cdot}{\bigodot}}

\newcommand\scalemath[2]{\scalebox{#1}{\mbox{\ensuremath{\displaystyle #2}}}}

\makeatletter
\renewcommand*\env@matrix[1][\arraystretch]{%
  \edef\arraystretch{#1}%
  \hskip -\arraycolsep
  \let\@ifnextchar\new@ifnextchar
  \array{*\c@MaxMatrixCols c}}
\makeatother


\title{A multi-factor polynomial framework for long-term electricity forwards with delivery period\thanks{This research is part of a collaboration with Axpo Solutions AG. Xi Kleisinger-Yu gratefully acknowledges the financial support and electricity data provided by Axpo Solutions AG.}}

\author{Xi Kleisinger-Yu\thanks{ETH Zurich, Department of Mathematics, R\"amistrasse 101, 8092 Zurich, Switzerland, {xi.kleisinger-yu@math.ethz.ch}.}
  \and Vlatka Komaric\thanks{Axpo Solutions AG, Risk Management and Valuation Department, Parkstrasse 23, 5400 Baden, Switzerland,
  {vlatka.komaric@axpo.com}.} 
\and Martin Larsson\thanks{Carnegie Mellon University, Department of Mathematical Sciences, Pittsburgh, Pennsylvania 15213, USA, {martinl@andrew.cmu.edu}.}
  \and Markus Regez\thanks{Axpo Solutions AG, Risk Management and Valuation Department, Parkstrasse 23, 5400 Baden, Switzerland, {markus.regez@axpo.com}.}}
\date{June 09, 2020 \medskip \\ forthcoming in SIAM Journal on Financial Mathematics}

\begin{document}

\maketitle


\begin{abstract}
We propose a multi-factor polynomial framework to model and hedge long-term electricity contracts with delivery period. This framework has several advantages: the computation of forwards, risk premium and correlation between different forwards are fully explicit, and the model can be calibrated to observed electricity forward curves easily and well. Electricity markets suffer from non-storability and poor medium- to long-term liquidity. Therefore, we suggest a rolling hedge which only uses liquid forward contracts and is risk-minimizing in the sense of F\"ollmer and Schweizer. We calibrate the model to over eight years of German power calendar year forward curves and investigate the quality of the risk-minimizing hedge over various time horizons.
\end{abstract}

{\bf Key words:} Polynomial diffusions, electricity forwards, forward risk premium, market price of risk, local risk-minimization.

{\bf AMS subject classifications:} 91G20, 91G70

\section{Introduction}

Electricity differs from other energy commodities due to specific features such as limited storability, possibility of intra-day and day-ahead negative prices, its unique mechanism of the auction market, high liquidity of short- to medium-term trading and illiquidity of its long-term trading. Much of the academic literature is dedicated to short- to medium-term modeling of electricity spot and futures prices, as its highly frequent and huge data amount makes it ideal for empirical studies of time series analysis. However, the literature addressing the modeling of long-term electricity forwards and the corresponding hedging problems is scarce.

In this paper, we propose a mathematically tractable multi-factor polynomial diffusion framework to model long-term forwards, which captures long-term properties such as mean reversion well. In this framework the computation of forwards and cross-maturity correlations are fully explicit. Fitting the model to long time series of single market electricity data works easily and well. Furthermore, we set up a rolling hedge mechanism that only uses liquid forward contracts. This allows us to address the non-storability of electricity and poor liquidity in its long-term markets. Within the setup the hedging strategy we suggest minimizes the conditional variance of the cost processes at any time, and thus is risk-minimizing in the sense of F\"ollmer and Schweizer. A simulation study using the estimated model shows that the risk-minimizing rolling hedge significantly reduces, yet does not fully eliminate, the variance and skew of the long-term exposures.  

The proposed modeling framework has various applications in forward modeling. It can be used to smoothly extrapolate the curve to the non-liquid horizon while calibrating it to the liquid horizon; it can also be used to smooth the forward surface implied by the market once calibrated and to filter out market noise; moreover, it can be used to model the prices within the real data horizon between two quotation dates. Furthermore, the model can be extended to model multiple electricity markets and other energy markets simultaneously. It can thus serve as an alternative model for risk management purposes, and for conducting simulations. We do however not pursue such multi-market extensions in this paper.
 
Compared to other electricity modeling classes such as affine processes (mostly used as geometric models), this modeling framework has the advantage of being general but still very tractable, so that pricing formulas of spots, forwards (with instantaneous delivery) and forwards with delivery period have closed-form solutions. Moreover, it is possible to explicitly compute locally risk-minimizing hedging strategies in this framework which uses a rolling mechanism.

Our framework is introduced to model long-term markets and yearly forward contracts, which are the most liquidly traded long-term contracts. The primary focus is to capture dynamics over very long time horizons, including contracts with maturities far beyond the liquidity of long-term futures traded on the exchange. We calibrate the model to over-the-counter forwards with maturities of up to ten years from the quotation date. However, our framework can easily be extended to capture features such as spikes, seasonality and negative prices for spots and forwards with shorter time-to-maturity (day-ahead, week-ahead, month-ahead, quarter-ahead) and with shorter time frames (daily, monthly, quarterly). Incorporating such features does not change the polynomial structure, so that pricing and hedging remains tractable.

Polynomial models have been used to solve a number of problems in finance, see \cite{filipovic2017linear,ackerer2017option,ackerer2018jacobi,cuchiero2018polynomial,filipovic2018term,ackerer2016linear,filipovic2016quadratic,biagini2016polynomial,delbaen2002interest} for references as well as \cite{cuchiero2012polynomial,Larsson2016} for a treatment of the underlying mathematical theory. With the exception of \cite{filipovic2018polynomial}, the polynomial processes have not been used for electricity modeling. 
Our polynomial framework makes assumption on properties of spot and forward and not on supply-demand relation, and thus falls into the category of classical reduced-form model (see \cite{carmona2014survey} for details on reduced-form model versus structural approach). It is closest to the arithmetic models of  \cite{benth2007extracting,benth2007non,benth2008stochastic}, and extends them by making the spot price not a linear combination but a squared combination of underlying polynomial processes. In doing so we extend the class of stochastic process on the one hand, and guarantee non-negative spot prices on the other hand.  

The local risk-minimization hedging criterion of F\"ollmer \& Schweizer 1991 is one the two main quadratic hedging approaches; see e.g.\ \cite{follmer1991hedging,heath1999,schweizer1999guided,schweizer1990risk} for references of the general theory of local risk-minimization, \cite{follmer1986contributions} for the mean-variance hedge, and \cite{heath2001comparison} for a comparison of the two approaches. In a recent paper on hedging, a locally risk-minimizing hedge was given for the arithmetic model of Benth et al.\ under illiquidity; see \cite{christodoulou2018local}. Our work differs from theirs, as we consider a rolling hedge which only uses liquid forward contracts and give explicit expression for the locally risk-minimizing hedging strategy for our modeling framework.

This paper is structured in the following way: In Section 2, we define the underlying polynomial framework, model the spot price as a quadratic function of it, and provide two main specifications. In Section 3, we briefly review the main characteristics of polynomial diffusions, with a focus on the moment formula for polynomials of degree two. In Section 4, we define electricity forwards with and without delivery period. We give pricing formulas for forwards, as well as explicit expressions for covariances and correlations between different forwards. In Section 5, to incorporate time series observations of forward prices, we specify a market price of risk function, which determines the forward price dynamics under the real-world measure $\mathbb{P}$, and define the forward risk premium. In Section 6, we introduce a rolling hedge mechanism with liquidity constraints for hedging a long-term electricity commitment. Further, we give a rolling-hedge that is locally risk-minimizing in the sense of F\"ollmer and Schweizer. In Section 7, we perform model estimation of a specification of the polynomial framework to a time series of real observations of power forwards using a quadratic Kalman filter. Further we simulate forward curves and investigate the quality of the risk-minimizing hedge over various time horizon.      

Throughout this paper, we fix a filtered probability space $(\Omega, \mathcal{F}, \mathcal{F}_t, \mathbb{Q})$, where $\Q$ is a risk-neutral probability measure used for pricing. For simplicity we assume zero interest rate and thus apply no discounting. We denote by $\mathbb{S}^d$ the set of all symmetric $d\times d$ matrices and $\mathbb{S}^d_+$ the subset consisting of positive semidefinite matrices. We let $\text{Pol}_n$ denote the space of polynomials on $\R^d$ of degree at most $n$.


\section{The model}\label{sec_model}
In this section we define the underlying polynomial framework. Firstly, we model the spot price $S_t$ as a quadratic function of an underlying $d$-dimensional state variable $X_t$ which evolves according to a polynomial diffusion. More precisely, we let
\begin{align}
  S_t  &= p_S(X_t)  \label{St_def} \\
  dX_t &= \kappa (\theta - X_t) dt + \sigma(X_t) dW_t \label{X_SDE}
\end{align}
where $p_S(x)= c+x^\top Q x$ with $c\in \R_+$ and $Q\in \mathbb{S}^d_+$, $\kappa \in \R^{d \times d}$, $\theta \in \R^d$, $W$ a $d$-dimensional Brownian motion under $\Q$ and $\sigma: \R^d \rightarrow \R^{d\times d}$ is continuous. We assume that the components of the diffusion matrix $a(x) := \sigma(x)\sigma(x)^\top$ are polynomials of degree at most two. This ensures that $X_t$ is a polynomial diffusion, see Lemma 2.2 in \cite{Larsson2016}.

The above formulation allows in particular to capture mean reversion, an important feature of electricity price dynamics. Empirically, this has been backed up by e.g.\ \cite{koekebakker2005forward}. They examined Nordic electricity forwards from 1995--2001 and observed that the short-term price varies around the long-term price, indicating mean reversion. 
Several economic arguments also support the mean-reverting property; see e.g.\ \cite{escribano2011modelling}.

We will now focus on the following two specifications.

\begin{spc}[Two-factor model] \label{Sp_1}
Let $\kappa_Z$, $\kappa_Y\in\R$, $\sigma_Z,\sigma_Y>0$, and $\rho\in(-1,1)$. The process $X_t := (Z_t, Y_t)^\top$ evolves according to the SDE
\begin{equation}\label{Eq_2factor}
\begin{aligned}
dZ_t &= -\kappa_Z Z_t dt + \sigma_ZdW^{(1)}_t \\
dY_t &= \kappa_Y(Z_t - Y_t)dt + \rho\sigma_Y dW^{(1)}_t + \sigma_Y\sqrt{1-\rho^2} dW^{(2)}_t
\end{aligned}
\end{equation}
with $Z_0,Y_0\in\R$ and $W_t = ( W^{(1)}_t,  W^{(2)}_t)^\top$ a standard two-dimensional Brownian motion. Here $Y_t$ mean-reverts at rate $\kappa_Y$ towards the correlated process $Z_t$. And thus, $Y_t$ and $Z_t$ can be seen as factor processes that drive the short-end and long-end dynamics of spot prices respectively.  This model is consistent with the empirical findings by \cite{koekebakker2005forward} regarding mean reversion. 
Let $\alpha$, $\beta$, $c\in\R_{+}$ and let the spot price be given by
\begin{align*}
S_t := c + \alpha Y^2_t + \beta Z^2_t. 
\end{align*}
This guarantees nonnegative spot price, as $S_t \geq c\ge0$. This specification is of the form \eqref{St_def}--\eqref{X_SDE} with 
\begin{equation}\label{Eq_sigmax1}
Q = \begin{pmatrix}
\beta & 0 \\ 0 & \alpha
\end{pmatrix}, \
\kappa = \begin{pmatrix}
 \kappa_Z & 0 \\
-\kappa_Y & \kappa_Y  
\end{pmatrix}, \
\theta = \begin{pmatrix}
0\\ 0
\end{pmatrix}, \
 \sigma(x)=\sigma(z,y)= 
\begin{pmatrix}
\sigma_Z & 0\\
\rho\sigma_Y & \sigma_Y\sqrt{1 - \rho^2}  
\end{pmatrix}.
\end{equation}

\end{spc}

\begin{spc}[Three-factor model] \label{Sp_2}
We now present a specification which extends the two-factor model by modeling correlation between the underlying processes stochastically via a Jacobi process. Conditions under which the model exists and is unique are given below. Let $\kappa_Z$, $\kappa_Y\in\R$, $\kappa_R,\sigma_Z,\sigma_Y,\sigma_R>0$, and $\theta_R\in(-1,1)$. The process $X_t := (Z_t, Y_t, R_t)^\top$ evolves according to the SDE
\begin{equation}\label{Eq_spc2_SDE}
\begin{aligned}
dZ_t &= -\kappa_Z Z_t dt + \sigma_ZdW^{(1)}_t\\
dY_t &= \kappa_Y(Z_t - Y_t)dt + R_t\sigma_Y dW^{(1)}_t + \sigma_Y\sqrt{1-R_t^2} dW^{(2)}_t\\
dR_t &= \kappa_R (\theta_R - R_t)dt + \sigma_R \sqrt{1- R^2_t}dW^{(3)}_t
\end{aligned}
\end{equation}
with $Z_0,Y_0\in\R$, $R_0\in(-1,1)$, and $W_t =(W_t^{(1)}, W_t^{(2)}, W_t^{(3)})^\top$ a standard three-dimensional Brownian motion. Let $\alpha$, $\beta$, $c\in\R_{+}$ and let the spot price be given by
$$S_t := c + \alpha Y^2_t + \beta Z^2_t.$$
This specification is of the form \eqref{St_def}--\eqref{X_SDE} with
\begin{equation}\label{Eq_sigmax2}
\begin{aligned}
Q= \begin{pmatrix}  \beta & 0 & 0\\ 0 & \alpha & 0 \\ 0 & 0 & 0 \end{pmatrix},
\kappa = \begin{pmatrix}
\kappa_Z  & 0        &  0 \\
-\kappa_Y & \kappa_Y &  0 \\
0        &   0       &  \kappa_R 
\end{pmatrix} , \, 
\theta = \begin{pmatrix}
0 \\ 0 \\ \theta_R
\end{pmatrix}, \\
\sigma(x) = \sigma(z,y,r)= 
\begin{pmatrix}
\sigma_Z    & 0                        & 0 \\
r\sigma_Y & \sigma_Y\sqrt{1 - r^2} & 0 \\
0                         &  0           & \sigma_R\sqrt{1 - r^2}
\end{pmatrix}.
\end{aligned}
\end{equation}
\end{spc}

\begin{rem}
Although \sref{Specification}{Sp_2} is not used in our empirical analysis, we include it as an illustration of the flexibility of the polynomial framework.

A possible use of \sref{Specification}{Sp_2} is to model multi-energy commodities simultaneously. Here is a simple illustration of this: let one factor ($Z_t$) drive the short-term price of one market, and let the other factor ($Y_t$) drive the short-term price of the other market. Since energy markets evolve dynamically and prices are generally non-stationary over time (\cite{krevcar2019towards}), it is useful to have stochastic correlation between (the short ends of) different markets, modeled by a factor ($R_t$). The setup could be complemented with a fourth factor driving common long-term prices.

Alternatively, two markets can also be modeled as follows: two factors with the dynamics of $Y_t$, ($Y^i_t$, $i=1,2$), can be used to model short-term prices of each market; one factor ($Z_t$) drives the common long-end prices. In order to account for the changing relationship between short-term and long-term prices, another two factors with the dynamics of $R_t$, ($R^i_t$, $i=1,2$), can be added to model the stochastic correlation between the short-term and long-term prices in each market.

\end{rem}

\begin{prop}
Recall that $\kappa_R >0$, $\theta_R\in(-1,1)$, and assume moreover that
\begin{align}
\kappa_R(1+\theta_R)&\geq \sigma^2_R, \label{Rt_cond1}\\
\kappa_R(1-\theta_R)&\geq \sigma^2_R\label{Rt_cond2}.
\end{align}
Then for any initial condition with $Z_0\in\R$, $Y_0\in\R$ and $R_0\in(-1,1)$, there exists a unique strong solution $X_t = (Z_t, Y_t, R_t)^\top$ of the SDE \eqref{Eq_spc2_SDE}. Furthermore, this solution satisfies $R_t\in (-1,1)$ for all $t\ge0$.
\end{prop}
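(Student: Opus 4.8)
The plan is to solve the system \eqref{Eq_spc2_SDE} in cascade. Note the triangular structure: the equation for $R_t$ is autonomous; the equation for $Z_t$ is a scalar linear Ornstein--Uhlenbeck equation involving neither $Y_t$ nor $R_t$; and once $R_t$ and $Z_t$ have been constructed, the equation for $Y_t$ is linear in $Y_t$ with coefficients that are explicit, adapted, path-continuous functions of $Z_t$ and $R_t$. It therefore suffices to treat the three equations in the order $R$, then $Z$, then $Y$, with uniqueness propagating layer by layer.

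The only nontrivial layer is the Jacobi equation for $R_t$. On the open interval $(-1,1)$ the drift $r\mapsto\kappa_R(\theta_R-r)$ is affine and the diffusion coefficient $r\mapsto\sigma_R\sqrt{1-r^2}$ is $C^\infty$, hence both are Lipschitz on every compact subinterval of $(-1,1)$; classical It\^o theory then provides a unique strong solution $R_t$ up to the exit time $\tau:=\lim_n\tau_n$, where $\tau_n$ is the first time $R$ leaves $[-1+\tfrac1n,\,1-\tfrac1n]$. The crux is to show $\tau=\infty$ a.s., i.e.\ that neither endpoint $\pm1$ is attained, and this is exactly where \eqref{Rt_cond1}--\eqref{Rt_cond2} enter. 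I would apply Feller's test for explosions: writing the scale density $\mathfrak s(r)=\exp\!\left(-\int_0^r\frac{2\kappa_R(\theta_R-u)}{\sigma_R^2(1-u^2)}\,du\right)$ and using $1-u^2=(1-u)(1+u)$, one gets $\mathfrak s(r)\sim C_+(1-r)^{-\kappa_R(1-\theta_R)/\sigma_R^2}$ as $r\uparrow1$ and $\mathfrak s(r)\sim C_-(1+r)^{-\kappa_R(1+\theta_R)/\sigma_R^2}$ as $r\downarrow-1$; substituting these asymptotics into Feller's non-attainment integrals $\int^1\mathfrak s(r)\Bigl(\int^r\frac{dz}{\mathfrak s(z)\sigma_R^2(1-z^2)}\Bigr)dr$ and its analogue at $-1$, a short computation shows that they diverge precisely when $\kappa_R(1-\theta_R)\ge\sigma_R^2$ and $\kappa_R(1+\theta_R)\ge\sigma_R^2$ respectively, i.e.\ under \eqref{Rt_cond2} and \eqref{Rt_cond1}. (Equivalently: near $r=1$ the process $1-R_t$ behaves like a CIR diffusion with reversion level $1-\theta_R$ and volatility $\sqrt2\,\sigma_R$, whose Feller condition $2\kappa_R(1-\theta_R)\ge2\sigma_R^2$ is \eqref{Rt_cond2}; symmetrically at $r=-1$.) Hence $\tau=\infty$ and $R_t\in(-1,1)$ for all $t\ge0$.

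Given the path $R$, the equation $dZ_t=-\kappa_ZZ_t\,dt+\sigma_Z\,dW^{(1)}_t$ has the unique strong solution $Z_t=e^{-\kappa_Zt}Z_0+\sigma_Z\int_0^te^{-\kappa_Z(t-s)}\,dW^{(1)}_s$. Finally, with $R$ and $Z$ fixed, the $Y$-equation is linear in $Y_t$ with bounded adapted path-continuous coefficients (since $|R_t|\le1$, the coefficients $R_t\sigma_Y$ and $\sigma_Y\sqrt{1-R_t^2}$ are bounded, so the stochastic integrals are well defined), and variation of parameters yields the unique strong solution
\[
Y_t=e^{-\kappa_Yt}Y_0+\int_0^te^{-\kappa_Y(t-s)}\Bigl(\kappa_YZ_s\,ds+R_s\sigma_Y\,dW^{(1)}_s+\sigma_Y\sqrt{1-R_s^2}\,dW^{(2)}_s\Bigr).
\]
Setting $X_t=(Z_t,Y_t,R_t)^\top$ gives the desired unique strong solution of \eqref{Eq_spc2_SDE}, and it lives in $\R^2\times(-1,1)$ for all $t\ge0$.

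I expect the boundary non-attainment of the Jacobi component to be the only genuinely delicate step — specifically, verifying that the conditions \eqref{Rt_cond1}--\eqref{Rt_cond2} are exactly the thresholds at which Feller's test excludes the endpoints $\pm1$. Everything else reduces to standard linear-SDE theory together with the cascade structure of \eqref{Eq_spc2_SDE}.
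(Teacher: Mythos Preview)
Your argument is correct, and the cascade reduction to the autonomous Jacobi equation for $R_t$ followed by explicit linear solutions for $Z_t$ and $Y_t$ matches the paper. Where you diverge from the paper is in the treatment of the Jacobi component. The paper first establishes \emph{weak} existence for $R_t$ by truncating the coefficients outside $[-1,1]$, then proves boundary non-attainment by a McKean-type Lyapunov argument (apply It\^o to $\log(1-R_t^2)$ and use \eqref{Rt_cond1}--\eqref{Rt_cond2} to bound the drift below), and finally upgrades to a strong solution via pathwise uniqueness and Yamada--Watanabe, exploiting that $r\mapsto\sigma_R\sqrt{1-r^2}$ is H\"older-$\tfrac12$. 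You instead obtain strong existence and uniqueness up to the exit time directly from local Lipschitz continuity of the coefficients on compact subsets of $(-1,1)$, and then rule out attainment of $\pm1$ by Feller's boundary classification (or the equivalent CIR comparison). Your route is more economical and avoids the Yamada--Watanabe machinery, at the cost of invoking the one-dimensional scale/speed formalism; the paper's McKean argument is more self-contained and generalizes to settings (e.g.\ multidimensional state constraints) where Feller's test is unavailable. Both are standard and valid; your sketch of the Feller integral asymptotics is accurate, and the CIR heuristic you give makes the threshold conditions \eqref{Rt_cond1}--\eqref{Rt_cond2} transparent.
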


\begin{proof}
In the following we show the existence and uniqueness of a strong solution $R_t$ as well as its boundary non-attainment. Once this is shown, we can explicitly find $\widetilde{X}_t :=(Y_t, Z_t)$ in terms of $R_t$. Indeed, It\^o's formula yields
\[
d\left( e^{\tilde{\kappa} t}\widetilde{X}_t \right)= e^{\tilde{\kappa}t}\tilde{\kappa}\theta dt +  e^{\tilde{\kappa} t} \tilde{\sigma}(R_t) d\widetilde{W}_t,
\]
where $\tilde{\kappa} = \begin{pmatrix}
\kappa_Z  & 0        \\
-\kappa_Y & \kappa_Y  
\end{pmatrix}$
and $\tilde{\sigma}(r) = 
\begin{pmatrix}
\sigma_Z    & 0 \\
r\sigma_Y & \sigma_Y\sqrt{1 - r^2}
\end{pmatrix} $, which implies that
\[
\widetilde{X}_t = e^{-\tilde{\kappa} t} \widetilde{X}_0 + \int_0^t e^{-\tilde{\kappa} (t-s)}\tilde{\kappa} \theta ds + \int_0^t e^{-\tilde{\kappa} (t-s)}\tilde{\sigma}(R_t) d\widetilde{W}_s.
\]

We now prove existence of a weak solution of the SDE for $R_t$. Let $\varphi(r)$ be a continuous function that is equal to one for $r\in[-1,1]$ and is equal to zero for $|r|>2$, for example
\begin{align*}
\varphi(r)=
\begin{cases}
1 & |r|\leq 1\\
2-|r| & 1< |r|\leq2\\
0 & |r|>2.
\end{cases}
\end{align*}
We let  
$\tilde{b}(r):= b(r)\varphi(r)$ with $b(r):=\kappa_R(\theta_R - r)$ and $\tilde{\sigma}(r):=\sigma_R\sqrt{(1-r^2)_+}$. Then $\tilde{b}(r)$ and $\tilde{\sigma}(r)$ are continuous and bounded, and hence an $\R$-valued weak solution $R_t$ exists for the SDE $dR_t=\tilde{b}(R_t)dt+\tilde{\sigma}(R_t)dW^{(3)}_t$; see Theorem 4.22 of Section 5.4D in \cite{karatzas1998brownian}. We next show that $R_t$ stays in $(-1,1)$ using a version of ``McKean's argument". Let $p(r):=1-r^2$ and note that $p(R_0)>0$. Further define the stopping times $\tau_n:= \inf\{t: p(R_t)\leq \frac{1}{n}\}$ and $\tau:=\lim_{n\rightarrow \infty}\tau_n$. Observe that \eqref{Rt_cond1}--\eqref{Rt_cond2} imply that $\kappa_R(1-\theta_RR_t)-\sigma_R^2\ge0$ for all $t<\tau$. Combined with It\^o's formula, this yields
\begin{align*}
d\log p(R_{t})  &=\left( -(2\kappa_R-\sigma_R^2) + 2\frac{\kappa_R(1-\theta_RR_t)-\sigma_R^2}{1-R_t^2} \right) dt - \frac{2\sigma_R R_{t}}{\sqrt{1-R^2_{t}}}dW^{(3)}_{t}  \\
&\geq -(2\kappa_R-\sigma_R^2) dt- \frac{2\sigma_R R_{t}}{\sqrt{1-R^2_{t}}}dW^{(3)}_{t},
\end{align*} 
for $t< \tau$. Consider the process
\[
M_{t} := \int_0^{t}  \frac{2\sigma_R R_s}{\sqrt{1-R^2_s}}dW^{(3)}_s, \quad t<\tau.   
\]
Then $M_{t}$ is a local martingale on the stochastic interval $[0,\tau)$. By definition, this means that for all $n\in\N$, $M_{t\wedge\tau_n}$ is a local martingale. We now show that 
$\tau = \infty$ a.s. Suppose for contradiction that $\mathbb{P}(\tau<\infty)>0$. Then there exists a large $T<\infty$ such that $\mathbb{P}(\tau<T)>0$. Note that
\begin{equation}\label{eq_SDE_proof_12543}
M_t \ge -(2\kappa_R-\sigma_R^2)t+\log p(R_0) - \log p(R_t) \ge -(2\kappa_R-\sigma_R^2)T+\log p(R_0)
\end{equation}
for all $t< T\wedge\tau$. Thus $M_{t\wedge T}$ is uniformly bounded from below, and hence a local supermartingale on the stochastic interval $[0,\tau)$. The supermartingale convergence theorem for processes on stochastic interval $[0,\tau)$ now gives that $\lim_{t\rightarrow \tau}  M_{t\wedge T}$ exists in $\R$ almost surely; see e.g.\ the proof of Theorem~5.7 in \cite{Larsson2016}. Hence, in view of \eqref{eq_SDE_proof_12543}, $-\log p(R_t)$ is pathwise bounded above on $[0,T\wedge\tau)$, which in turn means that $\tau>T$ a.s. This contradiction shows that $R_t\in(-1,1)$ for all $t\geq0$.

Now let $\sigma(r)=\sqrt{1-r^2}$. Then $\tilde{b}(R_t)=b(R_t)$ and $\tilde{\sigma}(R_t)=\sigma(R_t)$ on $(-1,1)$, and therefore, $R_t$ is an $(-1,1)$-valued weak solution of the SDE $dR_t=b(R_t)dt+\sigma(R_t)dW^{(3)}_t$. For the existence and uniqueness of strong solutions, we note that $b(.)$ is Lipschitz continuous, and $\sigma(.)$ is H\"older continuous of order $1/2$. Hence, pathwise uniqueness holds for this SDE; see Theorem 3.5(ii) in \cite{revuz2013continuous}. As a result, any $(-1,1)$-valued solution is a strong solution by the Yamada--Watanabe theorem; see e.g.\ Theorem 1.7 in \cite{revuz2013continuous}. 
\end{proof}

Although our main focus in this paper is on pricing and hedging of long-term contracts, let us indicate how the framework can be adjusted to incorporate features that are important over shorter time horizons.

\subsection*{Negative prices}
In short-term electricity markets (real-time or day-ahead markets), prices frequently become negative; see e.g.\ \cite{carmona2014survey} for PJM, \cite{genoese2010occurrence} for German EEX. As electricity is non-storable, any disturbance of demand or of supply can cause negative prices.\footnote{To be more precise, negative prices can be caused by e.g.\ error predictions of the load, high temperature volatilities, network transmission and congestion issues (causing oversupply in one region and undersupply in another), and overdemand through prediction error from generation via renewable energy (wind and PV).} The polynomial model can be extended to allow for negative prices for short-term modeling by simply taking $c<0$. This way the spot price is bounded from below by $c$, $S_t\geq c$, which can be negative. This small modification does not change the polynomial structure, and thus, all computations and properties for forwards and hedges remain the same.

For long-term markets this feature is less relevant, as long-term prices are generally insensitive to temporary shocks. Indeed, the data of German Calendar year baseload forwards (over 8 years) does not contain negative prices.

\subsection*{Seasonality}
In electricity markets, prices highly depend on the exact delivery period, e.g.\ offpeak vs.\ peak hours, winter months vs.\ summer months, or specific quarters. Thus, if we compare contracts with same delivery length but different delivery periods, that is, different subperiods of a year, it is important to first adjust for seasonality before making reasonable comparison.  It is possible to incorporate seasonality by making $p_S$ not only a state-dependent, but also time-dependent mapping. More specifically, we can let $p_S(t,x) := c(t)+x^\top Q(t) x$, where $c$ and $Q$ have temporal components. This leads to a time-inhomogeneous version of the polynomial property, which remains tractable.
 
Note that all yearly baseload contracts deliver throughout the year and not only for a specific subperiod of the year. To capture these forwards in long-term markets, it is not necessary to explicitly model seasonality.

\subsection*{Spikes/Jumps} 
In short-term markets, one often observes extreme price changes in spot prices, known as spikes. These result from unanticipated shocks in demand, and exist only temporarily. In other words, prices don't stay at the new level, but revert rapidly back to the previous level. Because of their temporary nature, it is reasonable to argue that the spikes have a negligible effect on long-term prices, and therefore, should not be included in the framework for modeling long-term electricity forwards. 

However, our model can be extended to account for spikes if needed, say to model short-term spot prices, or joint short- and long-term markets. One possible way of doing so is to multiply the spot price by a mean-reverting jump process that jumps and then very quickly mean-reverts towards its standard level of $1$. A simple example is given by:
\begin{align*}
S_T &= p_S(X_t) J_t,\\
dJ_t &= \theta_J(1-J_t)dt + \int \sigma_J(X_t, v) N(dv ,dt), 
\end{align*}
where $N(dv,dt)$ is a Poisson random measure, $\theta_J$ a large mean-reversion parameter, which forces the process to revert quickly to the previous level after a jump.
Another possibility is to incorporate spikes by an additive component, e.g.
\begin{align*}
S_T &= p_S(X_t) +J_t,\\
dJ_t &= -\theta_J J_tdt +  \int \sigma_J(X_t, v)N(dv, dt).
\end{align*}
Either way, the extensions do not change the behavior of long-dated forward but only the short-term forward and spot, because all the jumps mean-revert very quickly and so do not have an effect on long term prices. Provided $\sigma_J(x,v)$ is chosen appropriately, many of the properties of polynomial diffusions (such as the moment formula) still apply; see \cite[Section~5]{filipovic2019polynomial} for more details.

\section{Polynomial diffusions and moment formulas}

In this section we briefly review some important results regarding polynomial diffusions; see e.g.\ \cite{Larsson2016} for more details. We also provide a moment formula for polynomials of degree two. Consider the (extended) generator $\G$ associated to the $\R^d$-valued polynomial diffusion $X_t$ introduced in \sref{Section}{sec_model}, namely
$$\G f(x) = \dfrac{1}{2}\text{Tr}(a(x)\nabla^2 f(x)) + (\kappa(\theta-x))^\top \nabla f(x)$$
 for $x\in\R^d$ and any $C^2$ function $f$. By It\^{o}'s formula, the process 
$$f(X_t) -f(X_0)-\int_0^t \G f(X_u) du$$ 
is a local martingale. Since the components of $a(x)$ are polynomials of degree at most two, it follows that for any $n\in\mathbb{N}$ and any polynomial $p\in \text{Pol}_n$, $\G p$ is also polynomial of the same degree or lower degree, i.e.\ $\G p \in \text{Pol}_n$.

Fix $n$ and let $N=\binom{d+n}{n}$ be the dimension of $\text{Pol}_n$. Let $H:\R^d \rightarrow \R^N$ be a function whose components form a basis of $\text{Pol}_n$. Then for any $p \in \text{Pol}_n $ 
\begin{align}
p(x) &= H(x)^\top  \vec{p},
\label{p_repres}\\
\G p(x)&= H(x)^\top  G \, \vec{p},
 \label{Gp_repres}
\end{align}
where $\vec{p}\in \R^N$ is the coordinate representation of $p(x)$, and $G\in \R^{N\times N}$ the matrix representation of the generator $\G$.

\begin{thm}[Moment formula - general version]\label{Thm_Momentformula1}
Let $p$ be a polynomial with coordinate representations \eqref{p_repres}--\eqref{Gp_repres}. Further let $X_t$ satisfy \eqref{X_SDE}. Then for $0\leq t\leq T$ we have: 
\begin{align}
 \E_\Q[ p(X_T) | ~ \mathcal{F}_t] = H(X_t)^\top  e^{(T - t)G} \vec{p}. 
\end{align}

\end{thm}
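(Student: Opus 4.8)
The plan is to recognize the right-hand side as the solution of the Kolmogorov backward equation associated with $X$ and to verify it by a martingale argument. Fix $t$ and $T$ with $0\le t\le T$, and for $s\in[0,T]$ introduce the polynomial $q_s\in\text{Pol}_n$ with coordinate vector $\vec q_s := e^{(T-s)G}\vec p$, so that $q_s(x)=H(x)^\top e^{(T-s)G}\vec p$ and $q_T=p$. Since $G$ is the matrix representation of $\G$ on $\text{Pol}_n$ by \eqref{Gp_repres}, we get $\G q_s(x)=H(x)^\top G\,e^{(T-s)G}\vec p$, while $\partial_s q_s(x)=-H(x)^\top G\,e^{(T-s)G}\vec p$; hence $\partial_s q_s + \G q_s = 0$ for every $s$. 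In other words, $u(s,x):=q_s(x)$ is a $C^{1,2}$ function solving the backward equation $\partial_s u + \G u = 0$ on $[0,T]\times\R^d$ with terminal condition $u(T,\cdot)=p$.

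Next I would apply It\^o's formula to $u$ along $X$. Because $u(s,\cdot)\in\text{Pol}_n$ for each $s$ and $X$ is a polynomial diffusion with (extended) generator $\G$, the finite-variation part of $u(s,X_s)$ equals $\int_t^s(\partial_r u + \G u)(r,X_r)\,dr = 0$, so that $u(s,X_s)=u(t,X_t)+\int_t^s \nabla_x u(r,X_r)^\top\sigma(X_r)\,dW_r$ is a local martingale on $[t,T]$. If this local martingale is in fact a true martingale, then taking $\E_\Q[\,\cdot\mid\mathcal F_t]$ on both sides over the interval $[t,T]$, using $u(T,X_T)=p(X_T)$ and $u(t,X_t)=H(X_t)^\top e^{(T-t)G}\vec p$, immediately yields the asserted identity.

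The remaining — and only non-routine — point is the integrability needed to upgrade the local martingale to a martingale. Since the drift $x\mapsto\kappa(\theta-x)$ is affine and the entries of $a=\sigma\sigma^\top$ are polynomials of degree at most two (so $\sigma$ grows at most linearly), a standard moment estimate via Gronwall's inequality gives $\E_\Q\big[\sup_{0\le s\le T}\lVert X_s\rVert^{2n}\big]<\infty$. Together with the boundedness of $s\mapsto e^{(T-s)G}\vec p$ on $[0,T]$, this provides an integrable dominating variable of the form $C\,(1+\sup_{s\le T}\lVert X_s\rVert^{n})$ for $|u(s,X_s)|$, $s\in[t,T]$. Localizing with the stopping times $\tau_m:=\inf\{s:\lVert X_s\rVert\ge m\}$, optional stopping gives $\E_\Q[u(T\wedge\tau_m,X_{T\wedge\tau_m})\mid\mathcal F_t]=u(t,X_t)$, and dominated convergence lets me send $m\to\infty$ to obtain $\E_\Q[p(X_T)\mid\mathcal F_t]=H(X_t)^\top e^{(T-t)G}\vec p$. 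I expect the polynomial moment bound (equivalently, the uniform integrability along the localizing sequence) to be the main obstacle; once it is in place, everything else is bookkeeping with the matrix exponential and It\^o's formula.
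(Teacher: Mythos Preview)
Your argument is correct and is essentially the standard proof of the polynomial moment formula. Note, however, that the paper does not actually prove this theorem: its entire proof is the one-line citation ``See Theorem 3.1 of \cite{Larsson2016}.'' What you have written is, in outline, precisely the proof given in that reference: set $u(s,x)=H(x)^\top e^{(T-s)G}\vec p$, verify the backward equation $\partial_s u+\G u=0$ from the matrix representation of $\G$, apply It\^o's formula to get a local martingale, and then upgrade to a true martingale via a polynomial moment bound. Your use of a direct Gronwall/BDG estimate for $\E_\Q[\sup_{s\le T}\|X_s\|^{2n}]$ is the right way to close the loop here; by contrast, the paper's proof of the degree-two version (\sref{Theorem}{Thm_Momentformula2}) bounds the quadratic variation and then \emph{invokes} \sref{Theorem}{Thm_Momentformula1} to conclude finiteness, so that route would be circular for the general statement.
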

\begin{proof}
See Theorem 3.1 of \cite{Larsson2016}.
\end{proof}
Below we give a more explicit version of the moment formula for polynomials of degree two. Note that the quantity ${\rm tr}(\pi \,a(x))$ with $\pi\in\S^d$ is quadratic in $x$, and thus of the form
\begin{align}
{\rm tr}(\pi \,a(x)) = a_0(\pi) + a_1(\pi)^\top x + x^\top a_2(\pi) \,x.
\label{eq_trace}
\end{align}
for some $a_0(\pi)\in\R$, $a_1(\pi)\in\R^d$, and $a_2(\pi)\in\S^d$ that depend linearly on $\pi$.

\begin{thm}[Moment formula for polynomials of degree two]\label{Thm_Momentformula2}
Let $q(x)$ be a polynomial of the form $q(x)= q_0 + \vec{q}^\top x +x^\top Q x$ with $q_0 \in \R$, $\vec{q}\in \R^d$ and $Q\in \S^d$. Further let $X_t$ satisfy \eqref{X_SDE}. Then for $0\leq t\leq T$ we have: 
\begin{align*}
 \E_\Q[~ q(X_T) ~ | ~ \mathcal{F}_t] = \phi(T-t) + \psi(T-t)^\top X_t + X_t^\top \pi(T-t)X_t, 
 \end{align*}
where $\phi,\psi,\pi$ solve the linear ODE
\begin{equation}\label{eq_poly}
\begin{aligned}
\phi^{\prime} &= \psi^\top \kappa \theta + a_0(\pi),  && \phi(0) =  q_0, \\
\psi^{\prime} &= -\kappa^\top \psi + 2\pi\kappa\theta + a_1(\pi), && \psi(0) = \vec{q},\\
\pi^{\prime} &= -\pi \kappa -\kappa^\top \pi  + a_2(\pi), && \pi(0) = Q.
\end{aligned}
\end{equation}
\end{thm}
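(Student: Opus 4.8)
The plan is to specialize the general moment formula, Theorem~\ref{Thm_Momentformula1}, to $n=2$ and to recognize the resulting matrix exponential as the flow of the linear ODE \eqref{eq_poly}. First I would fix the monomial basis $H = (1, x_1, \dots, x_d, x_1^2, x_1x_2, \dots, x_d^2)^\top$ of $\text{Pol}_2$, so that a quadratic $q(x) = q_0 + \vec q^\top x + x^\top Q x$ has a coordinate vector $\vec q_{\text{full}}$ assembled from the triple $(q_0, \vec q, Q)$, with the usual convention identifying the symmetric matrix $Q$ with the coefficients of the $x_ix_j$. Since the entries of $a(x)$ have degree at most two, $\G$ maps $\text{Pol}_2$ into $\text{Pol}_2$ and moreover cannot increase degree, so the matrix $G$ of \eqref{Gp_repres} is block upper-triangular for the splitting of $\text{Pol}_2$ into constants, linear forms, and pure quadratic forms.

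Next I would compute $\G$ on each block. On constants $\G 1 = 0$; on a linear form, $\G(\vec q^\top x) = (\kappa\theta)^\top \vec q - (\kappa^\top \vec q)^\top x$; and on a quadratic form $x^\top \pi x$ with $\pi \in \S^d$, using $\nabla(x^\top\pi x) = 2\pi x$, $\nabla^2(x^\top\pi x) = 2\pi$, the trace identity $\tfrac12\mathrm{Tr}(a(x)\cdot 2\pi) = \mathrm{tr}(\pi a(x))$ and its decomposition \eqref{eq_trace}, together with the symmetrization $-2x^\top\kappa^\top\pi x = -x^\top(\pi\kappa + \kappa^\top\pi)x$, one obtains
\[
\G(x^\top\pi x) = a_0(\pi) + \bigl(a_1(\pi) + 2\pi\kappa\theta\bigr)^\top x + x^\top\bigl(a_2(\pi) - \pi\kappa - \kappa^\top\pi\bigr)x .
\]
Reading off the columns of $G$ from these three formulas shows that, under the identification above, $G$ acts on a coordinate triple $(\phi,\psi,\pi)$ precisely by the right-hand side of \eqref{eq_poly}.

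Then I would invoke Theorem~\ref{Thm_Momentformula1}, which gives $\E_\Q[q(X_T)\mid\mathcal F_t] = H(X_t)^\top e^{(T-t)G}\vec q_{\text{full}}$. Setting $u(s) := e^{sG}\vec q_{\text{full}}$, the vector $u$ solves $u'(s) = G u(s)$ with $u(0) = \vec q_{\text{full}}$; since $a_0,a_1,a_2$ depend linearly on $\pi$, this is a linear ODE with constant coefficients, hence has a unique solution on all of $[0,\infty)$, so the formula is well defined for every $T$. Splitting $u(s)$ into its constant, linear, and quadratic coordinates $(\phi(s),\psi(s),\pi(s))$ and using the block structure of $G$ from the previous step, the system $u'=Gu$ is exactly \eqref{eq_poly}, with $\phi(0)=q_0$, $\psi(0)=\vec q$, $\pi(0)=Q$. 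Substituting $s = T-t$ yields $\E_\Q[q(X_T)\mid\mathcal F_t] = H(X_t)^\top u(T-t) = \phi(T-t) + \psi(T-t)^\top X_t + X_t^\top\pi(T-t)X_t$, as claimed. An equivalent route that bypasses Theorem~\ref{Thm_Momentformula1} is to let $F(t,x):=\phi(T-t)+\psi(T-t)^\top x + x^\top\pi(T-t)x$, verify by It\^o's formula that \eqref{eq_poly} makes the drift of $F(t,X_t)$ vanish, and upgrade the local martingale $F(t,X_t)$ to a true martingale via the polynomial moment bounds for $X$, concluding $\E_\Q[q(X_T)\mid\mathcal F_t]=F(t,X_t)$.

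The differentiations and the trace identity are routine; the one place to be careful is the quadratic block, namely keeping the representing matrices symmetric after rewriting $-2x^\top\kappa^\top\pi x$, and using the (non-unique) monomial-coordinate identification consistently, so that the restriction of $G$ to quadratics is genuinely $\pi\mapsto a_2(\pi)-\pi\kappa-\kappa^\top\pi$ rather than an unsymmetrized variant. The conceptual point that everything stays inside $\text{Pol}_2$ is immediate from the standing assumption on $a(x)$, so no real obstacle arises beyond this bookkeeping.
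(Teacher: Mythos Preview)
Your proof is correct. Your primary route --- specializing Theorem~\ref{Thm_Momentformula1} to $n=2$ and identifying $e^{sG}$ acting on the coordinate triple $(\phi,\psi,\pi)$ with the flow of \eqref{eq_poly} --- differs from the paper's argument, which is in fact the ``equivalent route'' you sketch at the end: the paper defines $M(t,X_t)=\phi(T-t)+\psi(T-t)^\top X_t + X_t^\top\pi(T-t)X_t$, applies It\^o's formula directly to verify that \eqref{eq_poly} kills the drift, and then upgrades the resulting local martingale to a true square-integrable martingale by bounding the diffusion term via $\|a(x)\|_{\rm op}\le C(1+\|x\|^2)$ and invoking the fourth-moment bound supplied by Theorem~\ref{Thm_Momentformula1}. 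Your approach is shorter and more algebraic, since Theorem~\ref{Thm_Momentformula1} already packages both the martingale property and the integrability, leaving only the bookkeeping of matching the block structure of $G$ with the right-hand side of \eqref{eq_poly}; the paper's approach, by contrast, makes the PDE/martingale mechanism explicit and is closer to self-contained in spirit, although it still calls on Theorem~\ref{Thm_Momentformula1} for the moment bound at the end. Both arguments rest on the same structure, and you correctly flag the one genuine subtlety: keeping the quadratic block symmetric so that $-2x^\top\kappa^\top\pi x$ is represented by $-\pi\kappa-\kappa^\top\pi$ rather than an unsymmetrized variant.
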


\begin{proof}
Define 
\begin{align*}
M(t, X_t) := \phi(T-t) + \psi(T-t)^\top X_t + X^\top_t \pi(T-t)X_t.
\end{align*}
Let $\tau = T-t$.  It\^o's formula along with \eqref{eq_trace} and then \eqref{eq_poly} gives:
\begin{align*}
dM(t, X_t) &= -\phi^\prime(\tau)dt -\psi^\prime(\tau)^\top X_t dt - X_t^\top \pi^\prime(\tau) X_t dt +\psi(\tau)^\top dX_t \\
&\quad + 2X_t^\top \pi(\tau) dX_t +\frac{1}{2}\cdot 2{\rm tr}(\pi(\tau)d\langle X\rangle_t) \\
&= \Bigl( -\phi^\prime(\tau) -\psi^\prime(\tau)^\top X_t  - X_t^\top \pi^\prime(\tau)X_t  +\psi(\tau)^\top \kappa \theta - \psi(\tau)^\top \kappa X_t \\
&\quad +2\theta^\top\kappa^\top\pi(\tau)X_t -  2 X_t^\top \pi(\tau) \kappa X_t + {\rm tr}(\pi(\tau)a(X_t)) \Bigl)dt \\
&\quad +\widehat{\sigma}(t, X_t)dW_t \\
&= \Bigl( \left[ -\phi^\prime(\tau) +\psi(\tau)^\top \kappa\theta +a_0(\pi(\tau))\right]\\ 
&\quad + \left[ -\psi^\prime(\tau) - \kappa^\top\psi(\tau) + 2\pi(\tau)\kappa\theta + a_1(\pi(\tau))\right]^\top X_t\\ 
&\quad+X_t^\top \left[ -\pi^\prime(\tau)  - \pi(\tau)\kappa - \kappa^\top\pi(\tau) + a_2(\pi(\tau)) \right]  X_t\Bigr) dt +\widehat{\sigma}( t,X_t)dW_t\\
&=  \widehat{\sigma}(t,X_t)dW_t,
\end{align*}
where $\widehat{\sigma }(t,x) := (\psi(\tau) + 2\pi(\tau)x)^\top \sigma( x )$. Thus, $M(t,X_t)$ is a local martingale. Now we let $C\in \R$ be a constant such that $\|a(x)\|_{\rm op}\leq C(1+\|x\|^2)$. Then with Cauchy-Schwartz inequality,
\begin{align*}
\|\widehat{\sigma}(t, X_t) \|^2 
&\leq \|\psi(T-t) + 2\pi(T-t)X_t \|^2  \| a(X_t)\|_{\rm op}\\
&\leq \widetilde{C} (1+\|X_t\|^4),
\end{align*}
for some constant $\widetilde{C}\in\R$. Together with Tonelli's theorem, this bound yields
\[
\E\left[ \int_0^T \|\widehat{\sigma}(t, X_t) \|^2dt\right] \leq  \widetilde{C} \int_0^T  \E\left[ 1+\|X_t\|^4 \right] dt,
\]
which is finite by \sref{Theorem}{Thm_Momentformula1}. Hence, $M(t,X_t)$ is a square-integrable true martingale. As a result,
 \[ M(t,X_t) = \E [M(T,X_T)|\mathcal{F}_t]=\E[q_0 + \vec{q} X_T+ X^\top_TQ X_T|\mathcal{F}_t]=\E[q(X_T)|\mathcal{F}_t].\]
This is the claimed formula.
\end{proof}

\section{The term structure of forward prices}\label{s_forwardprices}
In this section we define electricity forwards, present their pricing formulas and give expressions for covariances and correlations between different forwards. 

The price at time-$t$ of an electricity forward with instantaneous delivery at time $T\geq t$ is given by
\begin{align}
f(t, T, X_t) :=  \E_\Q\left[  S_T \big|  \mathcal{F}_t\right] .
\label{Eq_fwd_def_instantaneous}\end{align}
In practice, electricity is not delivered instantaneously, but gradually over a period of time. This leads us to the following definition: the time-$t$ price of an electricity forward with delivery period $[T_1, T_2)$, $t\leq T_1<T_2$, is given by 
 \begin{align}
F(t, T_1, T_2, X_t) :=  \frac{1}{T_2 - T_1}\E_\Q\left[ \int_{T_1}^{T_2} S_u \, du\big|  \mathcal{F}_t\right] .\label{Eq_fwd_def}\end{align}

Note that a forward contract (financial or physical) can have settlement that takes place either before or after the delivery period. Discounting is not needed in the pricing, as the difference in cashflow can be evened out by the purchase of a bond of that time period. $F(t,T_1,T_2, X_t)$ is often also referred to as swap price, as the delivery of underlying power happens over a period of time and thus the price is the averaged price over that period. 

It is intuitive that a forward with delivery period is the summation of all forwards (with instantaneous delivery) that deliver at single time points within the delivery period; moreover, a forward with delivery period which collapses into one single time point should be priced the same as a forward with instantaneous delivery. The following proposition confirms this relationship between forwards with and without delivery period. 
\begin{prop}\label{prop_Ff}
For $t\leq T_1 \leq T_2$, we have:
\begin{align*}
F(t,T_1, T_2,X_t) = \frac{1}{T_2 - T_1} \int_{T_1}^{T_2} f(t, u, X_t)du. 
\end{align*}
Moreover, 
\begin{align*}
\lim_{T_2 \rightarrow T_1 }F(t, T_1, T_2, X_t) = f(t, T_1, X_t).
\end{align*}
\end{prop}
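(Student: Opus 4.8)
The plan is to establish the first identity by a direct application of the Fubini--Tonelli theorem, and then obtain the second identity by a continuity/limit argument using the explicit polynomial form of $f$.

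First I would write out the definition \eqref{Eq_fwd_def},
\[
F(t,T_1,T_2,X_t) = \frac{1}{T_2-T_1}\,\E_\Q\!\left[\int_{T_1}^{T_2} S_u\,du \,\Big|\, \mathcal{F}_t\right],
\]
and interchange the (conditional) expectation with the time integral. To justify this I would invoke the nonnegativity of $S_u = p_S(X_u) \ge c \ge 0$, so that Tonelli's theorem applies to the nonnegative measurable integrand without any integrability hypothesis; alternatively one can cite the moment bound from \sref{Theorem}{Thm_Momentformula1} (which guarantees $\E_\Q[S_u]<\infty$ and local boundedness in $u$) and use Fubini. Either way one gets
\[
\E_\Q\!\left[\int_{T_1}^{T_2} S_u\,du \,\Big|\, \mathcal{F}_t\right] = \int_{T_1}^{T_2} \E_\Q[S_u \mid \mathcal{F}_t]\,du = \int_{T_1}^{T_2} f(t,u,X_t)\,du,
\]
where the last equality is just the definition \eqref{Eq_fwd_def_instantaneous} of $f$. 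Dividing by $T_2-T_1$ gives the first claim.

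For the second statement, I would note that by \sref{Theorem}{Thm_Momentformula2} applied to $q = p_S$, the map $u \mapsto f(t,u,X_t) = \phi(u-t) + \psi(u-t)^\top X_t + X_t^\top \pi(u-t) X_t$ is continuous in $u$ (indeed, $\phi,\psi,\pi$ solve a linear ODE and hence are continuous), with $f(t,T_1,X_t)$ its value at $u=T_1$. Then the fundamental theorem of calculus / mean value theorem for integrals gives
\[
\lim_{T_2\to T_1}\frac{1}{T_2-T_1}\int_{T_1}^{T_2} f(t,u,X_t)\,du = f(t,T_1,X_t).
\]
Combining with the first part finishes the proof.

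I do not expect any serious obstacle here; the only point requiring a little care is the justification of the interchange of expectation and integral, and the cleanest route is simply to appeal to the nonnegativity of $S$ and Tonelli's theorem (the measurability of $(u,\omega)\mapsto S_u(\omega)$ being clear from path-continuity of $X$). The continuity of $f$ in its maturity argument, needed for the limit, is immediate from the ODE representation in \sref{Theorem}{Thm_Momentformula2}.
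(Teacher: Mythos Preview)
Your proposal is correct and matches the paper's own argument essentially verbatim: the paper invokes the conditional Tonelli theorem (via nonnegativity of $S$) for the first identity and the fundamental theorem of calculus together with continuity of $T\mapsto f(t,T,X_t)$ for the second. The only cosmetic difference is that the paper cites \sref{Proposition}{prop_fwd} (the formula $f(t,T,X_t)=H(X_t)^\top e^{(T-t)G}\vec p_S$) for continuity in $T$, whereas you cite the ODE representation of \sref{Theorem}{Thm_Momentformula2}; either reference works.
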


\begin{proof}
In view of \eqref{Eq_fwd_def_instantaneous} and \eqref{Eq_fwd_def}, the first identity follows from the conditional version of Tonelli's theorem since $S_t$ is nonnegative. The second identity then follows from the fundamental theorem of calculus, using that $f(t,T,X_t)$ is continuous in $T$, see \sref{Proposition}{prop_fwd} below.
\end{proof}

The following result gives closed-form expression for the forward prices. 

\begin{prop}[{Pricing formula for forwards}]\label{prop_fwd}
 Let $\vec p_S$  be the coordinate representation of $p_S(x)$. The time-$t$ price of $f(t,T, X_t)$ for $t\leq T$ is
\begin{align*}
f(t,T, X_t) = H(X_t)^\top  e^{(T - t)G} \vec p_S,
\end{align*}
and the time-$t$ price of $F(t,T_1, T_2, X_t)$ for $t\leq T_1\leq T_2$ is
\begin{align*}
F(t, T_1, T_2, X_t) = \frac{1}{T_2 - T_1} H(X_t)^\top  e^{(T_1 - t)G} \int_{0}^{T_2 -T_1}e^{uG} du \, \vec p_S. 
\end{align*}
\end{prop}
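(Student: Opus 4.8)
The plan is to deduce both formulas from the moment formula of \sref{Theorem}{Thm_Momentformula1} together with \sref{Proposition}{prop_Ff}. First I would observe that the spot price $S_T = p_S(X_T)$ is a polynomial of degree two in $X_T$, so it has a coordinate representation $\vec p_S \in \R^N$ with respect to the basis $H$, i.e.\ $p_S(x) = H(x)^\top \vec p_S$. Applying \sref{Theorem}{Thm_Momentformula1} with $p = p_S$ gives directly
\[
f(t,T,X_t) = \E_\Q[S_T \mid \mathcal F_t] = \E_\Q[p_S(X_T)\mid\mathcal F_t] = H(X_t)^\top e^{(T-t)G}\vec p_S,
\]
which is the first claim. (Here one should note $f$ is continuous in $T$, as promised in the statement, since $T\mapsto e^{(T-t)G}$ is continuous; this also retroactively justifies the use of continuity in the proof of \sref{Proposition}{prop_Ff}.)

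For the delivery-period forward, the key step is to combine the representation of $f$ just obtained with the integral identity $F(t,T_1,T_2,X_t) = \frac{1}{T_2-T_1}\int_{T_1}^{T_2} f(t,u,X_t)\,du$ from \sref{Proposition}{prop_Ff}. Substituting the closed form for $f(t,u,X_t)$ yields
\[
F(t,T_1,T_2,X_t) = \frac{1}{T_2-T_1}\int_{T_1}^{T_2} H(X_t)^\top e^{(u-t)G}\vec p_S\,du = \frac{1}{T_2-T_1} H(X_t)^\top\!\left(\int_{T_1}^{T_2} e^{(u-t)G}\,du\right)\vec p_S,
\]
where pulling the constant (in $u$) factors $H(X_t)^\top$ and $\vec p_S$ out of the integral is justified because the integrand is continuous in $u$ and $[T_1,T_2]$ is compact (so the $\R^N$-valued integral is well defined as a Riemann/Bochner integral). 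Finally I would perform the change of variables $u \mapsto u - T_1$ inside the matrix integral and factor $e^{(T_1-t)G}$ out on the left: $\int_{T_1}^{T_2} e^{(u-t)G}\,du = e^{(T_1-t)G}\int_0^{T_2-T_1} e^{uG}\,du$, using that matrix exponentials of the same matrix commute, $e^{(u-t)G} = e^{(T_1-t)G}e^{(u-T_1)G}$. This gives exactly the stated expression for $F$.

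I do not expect any serious obstacle here: the proposition is essentially an immediate corollary of the moment formula and \sref{Proposition}{prop_Ff}. The only points requiring a word of care are (i) checking that $p_S$ indeed lies in $\text{Pol}_2$ so the moment formula applies — this is immediate from $p_S(x) = c + x^\top Q x$ — and (ii) justifying the interchange of the (deterministic, finite-dimensional) integral over $u$ with the linear operations $H(X_t)^\top(\cdot)\vec p_S$, which follows from continuity of $u \mapsto e^{(u-t)G}$ and linearity, with no measure-theoretic subtlety since the integral is over a bounded interval of a continuous matrix-valued function. One could alternatively derive the $F$ formula directly from \sref{Theorem}{Thm_Momentformula1} applied to the (time-averaged) payoff, but routing through \sref{Proposition}{prop_Ff} is cleaner and reuses the already-established first identity.
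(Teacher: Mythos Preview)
Your proof is correct and follows essentially the same approach as the paper, which simply states that the result follows from \sref{Theorem}{Thm_Momentformula1} and rearranging terms. You have merely spelled out the rearrangement (via \sref{Proposition}{prop_Ff} and the change of variables $u\mapsto u-T_1$) in more detail than the paper does.
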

\begin{proof}
This follows from \sref{Theorem}{Thm_Momentformula1} and rearranging terms.
\end{proof}

Note that $G$ is a non-invertible matrix. Still, $\int^{\tau}_0 e^{uG}du$ is explicit; see \sref{Appendix}{Appen_eGs} for the explicit computation. 

\medskip
\paragraph{\sref{Specification}{Sp_1}} Recall the \sref{Specification}{Sp_1} in \sref{Section}{sec_model}. We consider the basis given by
\begin{align}
H(x)= (1,z,y,z^2,yz,y^2 )^\top, ~ x=(z,y)^\top.
\label{Eq_H1}
\end{align}
Then $S_t$ can be uniquely represented as: 
\begin{align}
S_t= H(X_t)^\top \vec p_S ~ \text{ with }~\vec p_S = (c, 0,0, \beta,0,\alpha)^\top.
\label{Eq_pS1}
\end{align} 
For any $C^2$ function $f$ and $x= (z, y)^\top\in \R^2$, the generator $\G$ is :
\begin{align*}
\G f(x)
 = \begin{pmatrix}
  -\kappa_Z  z\\ \kappa_Y  z - \kappa_Y  y 
\end{pmatrix}^\top  \nabla f(x) + \frac{1}{2}\text{Tr}\left( 
\begin{pmatrix}
\sigma^2_Z & \rho\sigma_Y\sigma_Z \\
\rho\sigma_Y\sigma_Z & \sigma^2_Y
\end{pmatrix}
 \nabla^2 f(x)\right) .
\end{align*}
Applying $\mathcal{G}$ to each element of $H(X_t)$ gives its matrix representation,
 \begin{align}
G =
	\begin{pmatrix}
	0 & 0 & 0 & \sigma^2_Z & \rho\sigma_Y\sigma_Z & \sigma^2_Y \\
	0 & -\kappa_Z & \kappa_Y & 0 & 0 & 0 \\
	0 & 0 & -\kappa_Y & 0 & 0 & 0 \\
	0 & 0 & 0 & -2\kappa_Z & \kappa_Y & 0 \\
	0 & 0 & 0 & 0 & -\kappa_Z - \kappa_Y & 2\kappa_Y \\
	0 & 0 & 0 & 0 & 0 & -2\kappa_Y  
	\end{pmatrix} .
	\label{Eq_G1}
\end{align}

\medskip
\paragraph{\sref{Specification}{Sp_2}}Recall \sref{Specification}{Sp_2} in \sref{Section}{sec_model}. Here the general $\G$ actually preserves a proper subspace of $\text{Pol}_2$, namely the one spanned by the components of 
\begin{align}
H(x) =(1, z,y,r,z^2,yz,y^2)^\top, ~ x=(z,y,r)^\top. \label{Eq_H2} 
\end{align}
Therefore it is not necessary to include the remaining basis functions in the definition of $H$.
Then $S_t$ can be uniquely represented as
\begin{align}
S_t= H(X_t)^\top \vec p_S ~ \text{ with }~\vec p_S= (c, ~0,~0,~0,~\beta, ~0, ~\alpha)^\top.
\label{Eq_pS2}
\end{align}
For any $C^2$ function $f$ and $x= (z,y,r)^\top\in \R^3$,  the generator $\G$ is 
\begin{align*}
\G f(x) = \begin{pmatrix}
 -\kappa_Z  z\\
-\kappa_Y  y + \kappa_Y  z \\ 
 \kappa_R(\theta_R -r)
\end{pmatrix}^\top  \nabla f(x) 
+ \frac{1}{2}\text{Tr}\left( 
\begin{pmatrix}
\sigma^2_Z & \sigma_Y\sigma_Z  r & 0 \\
\sigma_Y\sigma_Z  r & \sigma^2_Y & 0\\
0  & 0 & \sigma^2_R (1- r^2)
\end{pmatrix}
 \nabla^2 f(x)\right) .
\end{align*}
Applying $\mathcal{G}$ to each element of $H(X_t)$ gives
\begin{align}
G =
	\begin{pmatrix}
	0 & 0 & 0 & \kappa_R \theta_R  &\sigma^2_Z & 0 & \sigma^2_Y \\
	0 & -\kappa_Z & \kappa_Y & 0& 0 & 0 & 0 \\
	0 & 0 & -\kappa_Y &0& 0 & 0 & 0 \\
	0 & 0 & 0 &-\kappa_R & 0& \sigma_Y\sigma_Z & 0 \\
	0 & 0 & 0 &0 & -2\kappa_Z & \kappa_Y & 0 \\
	0 & 0 & 0&0 & 0 & -\kappa_Z - \kappa_Y & 2\kappa_Y \\
	0 & 0 & 0& 0& 0 & 0 & -2\kappa_Y  
	\end{pmatrix} .
	\label{Eq_G2}
\end{align}

For later use, we briefly discuss the instantaneous quadratic covariation and correlations between different forwards and give explicit forms for both specifications. The instantaneous covariation between two forwards with instantaneous delivery at $T_1$ and $T_2$ is, at time $t\le T_1\wedge T_2$, 
\begin{align}
\frac{d}{dt}\langle f(t, T_1, X_t), f(t, T_2, X_t)  \rangle
= \vec p_S^\top  {e^{(T_2-t)G}}^\top \Sigma (X_t)   e^{(T_1-t)G} \vec p_S,
\label{Eq_covwodel}
\end{align}
where 
\begin{align}  
 \Sigma (X_t)dt =d\langle H(X), H(X)\rangle_t. \label{Eq_sigma}
\end{align}
We define the corresponding instantaneous correlation as 
\begin{equation} \label{Eq_corwodel}
\begin{aligned}
 &\mathrm{Corr}[ f(t, T_1,  X_t), f(t, T_2, X_t)] \\
 &= \frac{\vec p_S^\top {e^{(T_2-t)G}}^\top \Sigma (X_t)    e^{(T_1-t)G} \vec p_S}{\sqrt{\vec p_S^\top  {e^{(T_1-t)G}}^\top \Sigma( X_t)  e^{(T_1-t)G} \vec p_S \, \vec p_S^\top  {e^{(T_2-t)G}}^\top \Sigma (X_t)  e^{(T_2-t)G} \vec p_S}}
 \end{aligned}
\end{equation}
with $\Sigma (X_t)$ from \eqref{Eq_sigma}.
The matrices $ \Sigma (X_t)$ for \sref{Specification}{Sp_1} and \sref{Specification}{Sp_2} are given in \sref{Appendix}{append_Sigma}. Similarly, for $t\leq T_1 < T_2$ and $t \leq T_3 < T_4$, the time-$t$ instantaneous covariation of forwards  with delivery periods $[T_1,T_2)$ and $[T_3,T_4)$ is
\begin{align}
\frac{d}{dt}\langle F(t, T_1, T_2, X_t), F(t, T_3, T_4, X_t)  \rangle
= \vec w_{34}^\top  {e^{(T_3-t)G}}^\top \Sigma(X_t)   e^{(T_1-t)G} \vec w_{12},
 \label{Eq_covwdel}
\end{align}
and the time-$t$ instantaneous correlation is: 
\begin{equation} \label{Eq_corwdel}
\begin{aligned}
 &\mathrm{Corr}[~ F(t, T_1, T_2, X_t), F(t, T_3, T_4, X_t)~ ] \\
 &= \frac{\vec w_{34}^\top {e^{(T_3-t)G}}^\top \Sigma (X_t)    e^{(T_1-t)G} \vec w_{12}}{\sqrt{\vec w_{12}^\top  {e^{(T_1-t)G}}^\top \Sigma (X_t)  e^{(T_1-t)G} \vec w_{12}\,\vec w_{34}^\top  {e^{(T_3-t)G}}^\top \Sigma (X_t) e^{(T_3-t)G} \vec w_{34}}}
 \end{aligned}
\end{equation}
with $\Sigma (X_t)$ from \eqref{Eq_sigma} and 
\begin{align}
\vec w_{ij} = \int_{T_i}^{T_j}e^{uG} du \,\vec p_S.
\label{Eq_w}
\end{align}

\begin{rem}[Option pricing]
Let $p(X_T)$ be the payoff function of an option based on a forward or a spot. For example, for a European call on a forward with delivery period $[T_1, T_2)$, strike $K$, and maturing $T$, we have $p(X_T) = (F(T,T_1, T_2, X_T)-K)^+$. Modulo discounting, the time-$t$ price of such an option is the $\mathcal{F}_t$-conditional expectation of $p(X_T)$ under $\Q$. If $p$ is a polynomial function, we can obtain explicit pricing for the option by \sref{Theorem}{Thm_Momentformula1} (if the option is based on a spot) or \sref{Proposition}{prop_fwd} (if the option is based on a forward). 
If $p$ is not a polynomial, an approximation scheme is required. For example, one can use the polynomial expansion method described in \cite[Section~7]{filipovic2019polynomial}.
\end{rem}

\section{Market price of risk specification}\label{sec_lambda}
In order to incorporate time series observations of real-world forward curves, we must specify the forward dynamics under the real-world probability measure $\mathbb{P}$. Thus, in this section, we specify a market price of risk function $\lambda: \mathbb{R}^d \rightarrow \mathbb{R}^{d}$ by
\begin{align*}
 \lambda(x)= \sigma(x)^{-1}(\gamma + \Lambda \,x)
\end{align*} 
for some $\gamma \in \R^d$ and $\Lambda\in \mathbb{S}^{d\times d}$, and denote the associated Radon--Nikodym density process by 
\begin{align}
M^\lambda_t=\exp{\Big( \,\int_0^t \lambda(X_s)^\top dW_s - \frac{1}{2}\int_0^t \|\lambda(X_s)\|^2 ds \,\Big)}.\label{Eq_E}
\end{align} 
We choose $\gamma$ and $\Lambda$ such that $M^\lambda_t$ is a true martingale. We can then define $\mathbb{P}$ on every finite time interval $[0, T]$ via its Radon--Nikodym density $\frac{d\mathbb{P}}{d\mathbb{Q}}|_{\mathcal{F}_T}=M^\lambda_T$. Then, by Girsanov's theorem, the $\mathbb{P}$-dynamics of $X_t$ becomes 
\begin{align}
dX_t &= [(\kappa \theta+ \gamma) - (\kappa - \Lambda)X_t]dt + \sigma(X_t)\, dW^\mathbb{P}_t \label{Eq_Pdynamic}
\end{align}
with 
$dW^\mathbb{P}_t := dW_t -\lambda(X_t) dt $. Note that the speed of mean reversion is now adjusted to $\kappa-\Lambda$ from $\kappa$. 

Consider now \sref{Specification}{Sp_1}. In this case $M^\lambda_t$ is a true martingale for any choice for $\gamma$ and $\Lambda$, as the following result shows.

\begin{prop} Let $X_t$ evolve according to \eqref{Eq_2factor}. Then $M^\lambda_t$ from \eqref{Eq_E} is a martingale. 
\end{prop}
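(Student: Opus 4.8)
The plan is to verify Novikov-type / linear-growth conditions that guarantee the exponential local martingale $M^\lambda_t$ in \eqref{Eq_E} is in fact a true martingale. Since $\sigma(x)$ in \eqref{Eq_sigmax1} is a \emph{constant} invertible matrix (it does not depend on $x$), the market price of risk $\lambda(x) = \sigma^{-1}(\gamma + \Lambda x)$ is an affine function of $x$, say $\lambda(x) = \mu + \Gamma x$ for a vector $\mu$ and matrix $\Gamma$ that one reads off from $\gamma$, $\Lambda$ and $\sigma$. The standard tool here is the observation that $X_t$ under $\Q$ is a Gaussian process: in \sref{Specification}{Sp_1} the drift $\kappa(\theta - X_t) = -\kappa X_t$ is linear and $\sigma$ is constant, so $X_t$ is an Ornstein--Uhlenbeck process and hence $(X_t)_{t\le T}$ is jointly Gaussian with moments that are uniformly bounded on $[0,T]$.

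The key steps, in order, are: (i) write $\lambda(X_s) = \mu + \Gamma X_s$ and note $\|\lambda(X_s)\|^2 \le C(1 + \|X_s\|^2)$; (ii) invoke a sufficient condition for the true-martingale property that only requires control of exponential moments of $\int_0^t \|\lambda(X_s)\|^2\,ds$ over a short time span — the cleanest being the Beneš/Khasminskii criterion, or alternatively a stepwise Novikov argument: it suffices to show $\E_\Q\big[\exp(\tfrac12\int_{t_i}^{t_{i+1}}\|\lambda(X_s)\|^2\,ds)\big] < \infty$ on a partition $0 = t_0 < t_1 < \dots < t_n = T$ with small enough mesh, and then concatenate; (iii) bound $\int_{t_i}^{t_{i+1}}\|\lambda(X_s)\|^2\,ds \le C(t_{i+1}-t_i) + C\sup_{s\le T}\|X_s\|^2\cdot(t_{i+1}-t_i)$ and use that $\sup_{s\le T}\|X_s\|^2$ has Gaussian-type tails (finite exponential moments of order $\varepsilon\|X\|^2$ for small $\varepsilon$, which holds for the supremum of an OU process by Fernique's theorem or a direct reflection/maximal-inequality estimate); (iv) choose the mesh small enough that $C(t_{i+1}-t_i)$ is below the Fernique threshold, concluding finiteness of each local exponential moment and hence, by the stepwise Novikov lemma, that $M^\lambda$ is a martingale on $[0,T]$ for every $T$.

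I expect the main obstacle — really the only non-routine point — to be step (iii): producing the right exponential integrability for $\sup_{s\le T}\|X_s\|$ rather than merely for $X_T$ at a fixed time, and then packaging it so that the short-interval Novikov condition is met. Everything else (the affine form of $\lambda$ because $\sigma$ is constant, the linear drift giving an OU process, the quadratic growth bound on $\|\lambda\|^2$) is immediate from \sref{Specification}{Sp_1}. An even shorter route, if one wishes to avoid the supremum entirely, is to appeal directly to a known result that for an affine market price of risk driving an OU (more generally, a polynomial) process, the Girsanov density is a true martingale precisely because the changed dynamics \eqref{Eq_Pdynamic} is again an OU process with linear drift $\kappa - \Lambda$ and the same constant diffusion, so one can check global solvability of the $\mathbb{P}$-SDE and apply the criterion of, e.g., the linear-growth existence theorem to conclude no explosion — but the stepwise Novikov argument above is the most self-contained.
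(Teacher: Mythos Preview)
Your argument is correct and complete: in \sref{Specification}{Sp_1} the diffusion matrix is constant, so $X$ is a two-dimensional Ornstein--Uhlenbeck process, $\lambda(X_s)$ is affine in $X_s$, and the stepwise Novikov condition combined with Fernique's theorem for the supremum of a continuous Gaussian process yields the martingale property. The Jensen-based variant you allude to (averaging $\exp(\tfrac{h}{2}\|\lambda(X_s)\|^2)$ over $s\in[t_i,t_{i+1}]$ and using only pointwise Gaussian moments) would work just as well and avoids the supremum entirely.

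The paper takes a genuinely different route. Rather than exploiting the Gaussian structure, it augments the state to
\[
\widetilde X_t = \Big(Z_t,\,Y_t,\,Z_t^2,\,Y_tZ_t,\,Y_t^2,\,\textstyle\int_0^t(\sigma^{-1}\gamma+\sigma^{-1}\Lambda X_s)^\top dW_s\Big)^\top,
\]
observes that $\widetilde X$ has affine drift and diffusion in $\widetilde X$ (the quadratic terms are needed so that the quadratic variation of the last component becomes affine), and then invokes Corollary~3.9 of Kallsen and Muhle--Karbe on exponential moments of affine processes. Your approach is more elementary and self-contained, leaning on nothing beyond the OU/Gaussian structure specific to \sref{Specification}{Sp_1}; the paper's approach is shorter on the page but imports heavier machinery, with the compensating advantage that the affine-embedding idea generalizes more readily to settings (such as \sref{Specification}{Sp_2}) where $\sigma$ is state-dependent and $X$ is no longer Gaussian.
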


\begin{proof}
Define 
$
\widetilde{X}_t:=(Z_t,Y_t,Z^2_t,Y_tZ_t,Y^2_t,\int_0^t (\sigma^{-1}\gamma+\sigma^{-1}\Lambda X_t)^\top dW_t)^\top$. Note that $\widetilde{X}_t$ has drift and diffusion that are affine in $\widetilde{X}_t$; see computations in \sref{Section}{s_forwardprices}, \sref{Appendix}{append_Sigma} and \sref{Section}{subsec_KF}. Thus, by Kallsen \& Muhle-Karbe (Corollary~3.9 in \cite{kallsen2010exponentially}), $M^\lambda_t$ is a true martingale.
\end{proof}

To be explicit, let $\Lambda= \diag(\lambda_{Z},\lambda_{Y})$ and $\gamma =(\gamma_Z, \gamma_Y)^\top$. Then the $\mathbb{P}$-dynamics of $X_t$ is given by:
\begin{align}
dX_t =&\left[\begin{pmatrix}
     \gamma_Z \\
       \gamma_Y 
    \end{pmatrix} 
-  \begin{pmatrix}
     \kappa_Z - \lambda_Z & 0 \\
       -\kappa_Y  & \kappa_Y- \lambda_Y 
    \end{pmatrix}
       X_t\right] dt 
      + \begin{pmatrix}
    \sigma_Z & 0 \\
    \rho\sigma_Y  & \sigma_Y\sqrt{1 - \rho^2}  \\
    \end{pmatrix}  dW^{\mathbb{P}}_t . \label{Eq_Sp_Pdynam}
\end{align}
This can also be written as $dX_t=\kappa^\prime(\theta^\prime-X_t)dt+\sigma(X_t)dW^{\mathbb{P}}_t$ with
\begin{align*}
\kappa^\prime &=  \begin{pmatrix}
       \kappa_Z - \lambda_Z & 0 \\
       -\kappa_Y  & \kappa_Y- \lambda_Y   
\end{pmatrix} , &
\theta^\prime =  \begin{pmatrix}
\frac{\gamma_Z}{\kappa_Z - \lambda_Z} \\
\frac{\gamma_Y}{\kappa_Y-\lambda_Y}+ \frac{\kappa_Y\gamma_Z}{(\kappa_Z-\lambda_Z)(\kappa_Y-\lambda_Y)}
   \end{pmatrix},
\end{align*}
and $\sigma(x)$ from \eqref{Eq_sigmax1}.

In the case of \sref{Specification}{Sp_2} it is a more delicate problem to determine those market price of risk parameters for which $M^\lambda_t$ is a true martingale. Since we will not use \sref{Specification}{Sp_2} in our empirical analysis, we do not consider this issue here.

\subsection*{Forward risk premium}

We define the forward risk premium as the difference of the forward and the predicted spot price. The time-$t$ forward risk premium of a forward with instantaneous delivery at $T\geq t$ is thus given by
\begin{align*}
R(t,T, X_t) :&= \E_\mathbb{Q}[S_T\mid\mathcal{F}_t] - \E_\mathbb{P}[S_T\mid\mathcal{F}_t],
\end{align*}
and the time-$t$ forward risk premium of a forward with delivery period $[T_1, T_2)$, $t\leq T_1<T_2$, is given by
\begin{align*}
R(t, T_1, T_2, X_t):&= \frac{1}{T_2-T_1}\E_\mathbb{Q}\left[\int_{T_1}^{T_2}S_u \,du\mid\mathcal{F}_t\right] - \frac{1}{T_2-T_1}\E_\mathbb{P}\left[\int_{T_1}^{T_2}S_u \,du\mid \mathcal{F}_t\right].
\end{align*}
The notion above is consistent with the ex-ante notion of forward risk premium used by e.g.\ \cite*{benth2008pricing, benth2009information, benth2012critical, benth2014pricing,benth2019mean, krevcar2019towards}. Both the $\Q$- and $\mathbb{P}$-conditional expectations can be computed using the pricing formula in \sref{Proposition}{prop_fwd}. We obtain the following explicit expressions for forward risk premia:  
\begin{align*}
R(t,T, X_t) &= H(X_t)^\top \big[ e^{(T - t)G}  -e^{(T - t)G^\lambda}  \big] \vec p_S , \\
\intertext{and}
R(t,T_1, T_2, X_t) 
&= \frac{1}{T_2-T_1}H(X_t)^\top \big[ e^{(T_1 - t)G} \int_{0}^{T_2 -T_1}e^{uG} du -e^{(T_1 - t)G^\lambda} \int_{0}^{T_2 -T_1}e^{uG^\lambda} du \big] \vec p_S,
\end{align*}
where $G^\lambda$ denotes the matrix representation of the generator $\G$ under $\mathbb{P}$. 
For example, for \sref{Specification}{Sp_1} under $\mathbb{P}$, $X_t$ evolves according to \eqref{Eq_Sp_Pdynam}, and $G^\lambda$ is given by   
\begin{align*}
G^\lambda =
	\begin{pmatrix}
	0 & \gamma_Z & \gamma_Y & \sigma^2_Z & \rho\sigma_Y\sigma_Z & \sigma^2_Y \\
	0 & \lambda_Z-\kappa_Z & \kappa_Y & 2\gamma_Z & \gamma_Y & 0 \\
	0 & 0 & \lambda_Y-\kappa_Y & 0 & \gamma_Z & 2\gamma_Y \\
	0 & 0 & 0 & 2(\lambda_Z-\kappa_Z) & \kappa_Y & 0 \\
	0 & 0 & 0 & 0 & (\lambda_Z+\lambda_Y)-(\kappa_Z + \kappa_Y) & 2\kappa_Y \\
	0 & 0 & 0 & 0 & 0 & 2(\lambda_Y-\kappa_Y)  
	\end{pmatrix} .
\end{align*}

The forward risk premium arises from the market price of risk $\lambda(X_t)$ and the associated measure change via the Girsanov's theorem, designed so that the polynomial structure is preserved. This produces stochastic and time varying forward risk premia. The risk premia do not have a definite sign, and can alternate between being positive and negative.\footnote{Empirical studies of electricity forward risk premia show mixed findings; see e.g.\ \cite{bunn2013forward} for a literature survey, and \cite{valitov2019risk, viehmann2011risk} for discussions of the risk premium in the short-term German market in particular.} There is an extensive literature on market price of risk specifications, forward risk premia, and measure changes for electricity modeling; see e.g\  \cite*{benth2008pricing, weron2008market, benth2009information, benth2014pricing, krevcar2019towards, benth2019mean}.

\section{Hedging} \label{sec_hedging}

In this section we first describe a rolling hedge setup with constraints which addresses the illiquidity and non-storability issues when hedging a long-term electricity contract. Rolling hedges for commodities form a well-known hedging scheme; see for example \cite{glasserman2001shortfall,neuberger1999hedging}. We then briefly review the locally risk-minimizing hedge of F\"ollmer and Schweizer, and give a rolling hedge for our modeling framework that is risk-minimizing.

\subsection{A rolling hedge setup}
Suppose we have committed to deliver power from year $\T$ to year $\T+1$ for a large $\T\in \N$ (e.g. $\T=10$ years) and our objective is to hedge this long-term electricity commitment. In our framework the time-$t$ valuation of the commitment is 
\begin{align*}
\F_t:= F(t, \T, \T+1, X_t)= \E_\Q\left[ \int_{\T}^{\T+1} S_u  du\big|  \mathcal{F}_t\right]
\end{align*}
Note that $\F_t$ is a $\Q$-martingale and the pricing formula (\sref{Proposition}{prop_fwd}) gives explicit pricing at any $t\in [0, \T]$.
 In an interest rate context, the analogous hedging problem is rather easy: just buy bonds and hold them as the payout in 10 years is known in advance. For electricity the problem is more difficult for a number of reasons:
\begin{itemize}
\item Long-term forwards are not liquidly traded (otherwise \emph{buy and hold} the financial contracts as in the interest rate context);
\item Electricity cannot be stored without significant costs (otherwise \emph{cash and carry} as for other storable commodities: simply buy the amount needed in $[\T,\T+1]$ and hold).
\item Only short-term / near-dated contracts with same delivery length is available. But its underlying commodity (electricity) is not the same as the one underlying a long-term contract because power is not storable. Some empirical evidence suggests that short-term prices carry limited information about what spot prices will be far into the future (see \cite{handika2012relationship}). 
\end{itemize}

One possible strategy in this case is a \emph{rolling hedge}, where we take a long position in near-term contracts as a hedge, and roll the hedge going forward. The underlying assumption of this strategy is that near-dated yearly contracts are highly correlated with far-dated yearly contracts, and become more so as the maturity date approaches\footnote{Note that this statement does not contradict the common perception that the short- and long-term data are not very correlated, e.g. \cite{koekebakker2005forward}. The first nearby calendar year forward is often considered a medium-term or even a long-term contract. }. This assumption is supported by the data; see \sref{Figure}{Fig_corr} in \sref{Appendix}{append_corr}.

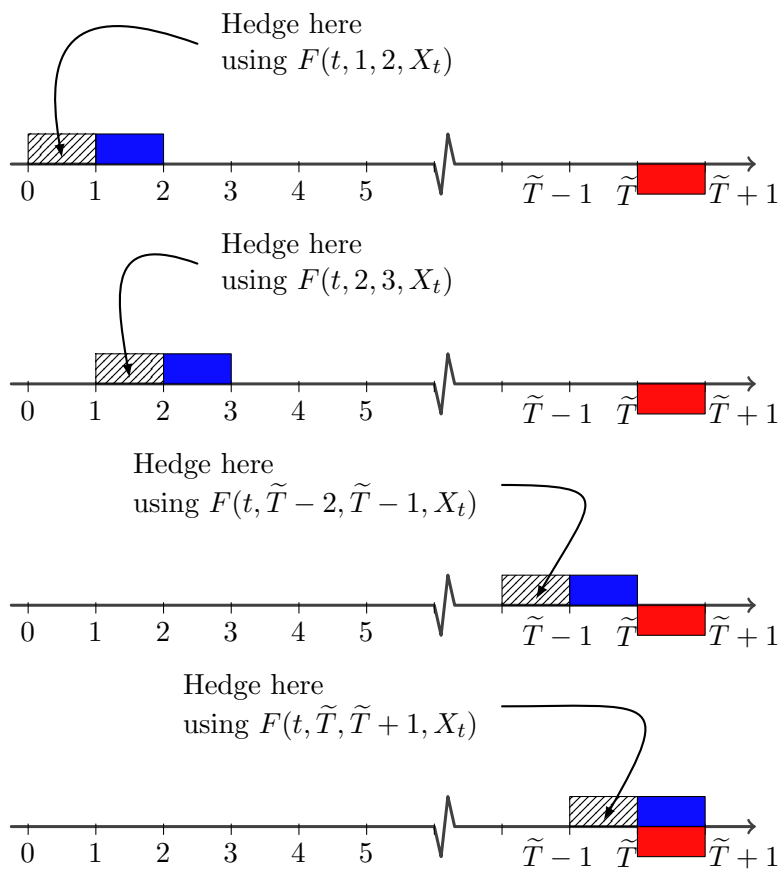
\begin{figure}[ht]
\begin{center}
\begin{tikzpicture}[line cap=round, line join=round, x=4.5cm, y=4cm, decoration={crosses}, set/.style={draw}]
  \draw[color=darkgray,line width=0.4mm,->] (-0.05,0) -- (1.2,0) -- (1.22,-0.1) -- (1.24,0.1) -- (1.26,0) -- (2.15,0);
  \foreach \x in {0,...,6}
     	\draw (\x/5,1pt) -- (\x/5,-3pt);
  \draw (1.4,1pt) -- (1.4,-3pt);
  \draw (1.6,1pt) -- (1.6,-3pt);
  \draw (1.8,1pt) -- (1.8,-3pt);
  \draw (2,1pt) -- (2,-3pt);
  \draw[color=black] (0.05,-10pt) node[left] {$0$};
  \draw[color=black] (0.25,-10pt) node[left] {$1$};
  \draw[color=black] (0.45,-10pt) node[left] {$2$};
  \draw[color=black] (0.65,-10pt) node[left] {$3$};
  \draw[color=black] (0.85,-10pt) node[left] {$4$};
  \draw[color=black] (1.05,-10pt) node[left] {$5$};
  \draw[color=black] (1.7,-10pt) node[left] {$\T-1$};
  \draw[color=black] (1.83,-10pt) node[left] {$\T$};
  \draw[color=black] (2.25,-10pt) node[left] {$\T+1$};

{
  \draw[-latex,thick] (0.5,0.4) node [right] {$\begin{array}{l}\text{Hedge here}\\\text{using $F(t,1,2, X_t)$}\end{array}$} to [out=160,in=100,looseness=1.8] (0.1,0.02);
  \draw [fill=blue,fill opacity=0.95] (0.2,0) rectangle (0.4,0.1);
  \draw [pattern=north east lines] (0,0) rectangle (0.2,0.1);
  }
 \draw[fill=red, fill opacity = 0.95] (1.8,0) rectangle (2.0,-0.1); 
  
\end{tikzpicture}

\begin{tikzpicture}[line cap=round, line join=round, x=4.5cm, y=4cm, decoration={crosses}, set/.style={draw}]
  \draw[color=darkgray,line width=0.4mm,->] (-0.05,0) -- (1.2,0) -- (1.22,-0.1) -- (1.24,0.1) -- (1.26,0) -- (2.15,0);
  \foreach \x in {0,...,6}
     	\draw (\x/5,1pt) -- (\x/5,-3pt);
  \draw (1.4,1pt) -- (1.4,-3pt);
  \draw (1.6,1pt) -- (1.6,-3pt);
  \draw (1.8,1pt) -- (1.8,-3pt);
  \draw (2,1pt) -- (2,-3pt);

  \draw[color=black] (0.05,-10pt) node[left] {$0$};
  \draw[color=black] (0.25,-10pt) node[left] {$1$};
  \draw[color=black] (0.45,-10pt) node[left] {$2$};
  \draw[color=black] (0.65,-10pt) node[left] {$3$};
  \draw[color=black] (0.85,-10pt) node[left] {$4$};
  \draw[color=black] (1.05,-10pt) node[left] {$5$};
  \draw[color=black] (1.70,-10pt) node[left] {$\T-1$};
  \draw[color=black] (1.83,-10pt) node[left] {$\T$};
  \draw[color=black] (2.25,-10pt) node[left] {$\T+1$};

{
  \draw[-latex,thick] (0.5,0.4) node [right] {$\begin{array}{l}\text{Hedge here}\\\text{using $F(t,2,3, X_t)$}\end{array}$} to [out=160,in=100,looseness=1.8] (0.3,0.02);
  \draw [pattern=north east lines] (0.2,0) rectangle (0.4,0.1);
  \draw [fill=blue,fill opacity=0.95] (0.4,0) rectangle (0.6,0.1);
\draw[fill=red, fill opacity = 0.95] (1.8,0) rectangle (2.0,-0.1); 
 
  }
\end{tikzpicture}
\begin{tikzpicture}[line cap=round, line join=round, x=4.5cm, y=4cm, decoration={crosses}, set/.style={draw}]
  \draw[color=darkgray,line width=0.4mm,->] (-0.05,0) -- (1.2,0) -- (1.22,-0.1) -- (1.24,0.1) -- (1.26,0) -- (2.15,0);
  \foreach \x in {0,...,6}
     	\draw (\x/5,1pt) -- (\x/5,-3pt);
  \draw (1.4,1pt) -- (1.4,-3pt);
  \draw (1.6,1pt) -- (1.6,-3pt);
  \draw (1.8,1pt) -- (1.8,-3pt);
  \draw (2,1pt) -- (2,-3pt);
  \draw[color=black] (0.05,-10pt) node[left] {$0$};
  \draw[color=black] (0.25,-10pt) node[left] {$1$};
  \draw[color=black] (0.45,-10pt) node[left] {$2$};
  \draw[color=black] (0.65,-10pt) node[left] {$3$};
  \draw[color=black] (0.85,-10pt) node[left] {$4$};
  \draw[color=black] (1.05,-10pt) node[left] {$5$};
  \draw[color=black] (1.70,-10pt) node[left] {$\T-1$};
  \draw[color=black] (1.83,-10pt) node[left] {$\T$};
  \draw[color=black] (2.25,-10pt) node[left] {$\T+1$};

{
  \draw[-latex,thick] (1.4,0.4) node [left] {$\begin{array}{l}\text{Hedge here}\\\text{using $F(t,\T-2,\T-1, X_t)$}\end{array}$} to [out=0,in=60,looseness=2.5] (1.5,0.02);
  \draw [pattern=north east lines] (1.4,0) rectangle (1.6,0.1);
  \draw [fill=blue, fill opacity=0.95] (1.6,0) rectangle (1.8,0.1);
  \draw[fill=red, fill opacity = 0.95] (1.8,0) rectangle (2.0,-0.1); 

  }
\end{tikzpicture}

\begin{tikzpicture}[line cap=round, line join=round, x=4.5cm, y=4cm, decoration={crosses}, set/.style={draw}]
  \draw[color=darkgray,line width=0.4mm,->] (-0.05,0) -- (1.2,0) -- (1.22,-0.1) -- (1.24,0.1) -- (1.26,0) -- (2.15,0);
  \foreach \x in {0,...,6}
     	\draw (\x/5,1pt) -- (\x/5,-3pt);
  \draw (1.4,1pt) -- (1.4,-3pt);
  \draw (1.6,1pt) -- (1.6,-3pt);
  \draw (1.8,1pt) -- (1.8,-3pt);
  \draw (2,1pt) -- (2,-3pt);
  \draw[color=black] (0.05,-10pt) node[left] {$0$};
  \draw[color=black] (0.25,-10pt) node[left] {$1$};
  \draw[color=black] (0.45,-10pt) node[left] {$2$};
  \draw[color=black] (0.65,-10pt) node[left] {$3$};
  \draw[color=black] (0.85,-10pt) node[left] {$4$};
  \draw[color=black] (1.05,-10pt) node[left] {$5$};
  \draw[color=black] (1.70,-10pt) node[left] {$\T-1$};
  \draw[color=black] (1.83,-10pt) node[left] {$\T$};
  \draw[color=black] (2.25,-10pt) node[left] {$\T+1$};

{
  \draw[-latex,thick] (1.4,0.4) node [left] {$\begin{array}{l}\text{Hedge here}\\\text{using $F(t,\T,\T+1, X_t)$}\end{array}$} to [out=0,in=60,looseness=2.5] (1.7,0.02);
  \draw [pattern=north east lines] (1.6,0) rectangle (1.8,0.1);
  \draw [fill=blue, fill opacity=0.95] (1.8,0) rectangle (2,0.1);
\draw[fill=red, fill opacity = 0.95] (1.8,0) rectangle (2.0,-0.1); 

  }
\end{tikzpicture}
\end{center}
\caption{ ~(color online). The mechanism of rolling hedges.}
\label{Fig_rollH}
\end{figure}

To formalize this, let us first define the price process $P_t$ containing all calendar-year forwards with a one-year delivery period (short: cal forward):
\begin{align}
P_t = 
\begin{pmatrix}[1.1]
P^1_t \\P^2_t\\ \vdots\\ P^{N-1}_t\\P^N_t
\end{pmatrix}
=\begin{pmatrix}[1.1]
F(t, 1, 2, X_t)\\
F(t, 2, 3, X_t)\\
\vdots\\
F(t, N-1, {N}, X_t)\\
F(t, N, N+1, X_t)\\
\end{pmatrix}
\label{Eq_Pt}
\end{align}
where $N=\T$ and $P^N_t= \F_t$.
Note that each $P^k_t$ is a $\Q$-martingale by its definition \eqref{Eq_fwd_def}. By \sref{Proposition}{prop_fwd}, $P^k_t$ can be expressed as
\begin{align}
P^k_t = H(X_t)^\top e^{(k-t)\, G}  \vec w_{01},\label{Eq_Pt1}
\end{align}
where $\vec w_{01}$ is defined in \eqref{Eq_w}.

An admissible\footnote{Note that for any polynomial processes $p(X_t)$ all moments of $P_t:= \E_\Q[p(X_T)|\mathcal{F}_t]$ exist. Therefore, integration with respect to any moments of $P$ is well-defined. And thus, $\xi\in L^2(P)$, i.e.\ $ \E_\Q[ \int_0^T ~\xi_s^\top d\langle P \rangle_s \xi_s ]< \infty$, and $\varphi := (\eta, \xi)^\top$ is admissible.  
} hedging strategy is an $\R^{N+1}$-valued process $\varphi_t= (\eta_t, \xi_t)^\top=(\eta_t, \xi^1_t,\dots , \xi^{N}_t)^\top $, where $\eta_t$ is adapted (representing bank account) and $\xi_t$ is predictable (representing amount of tradable assets or hedge ratio), and satisfies
\begin{align}
\xi^i_t = 0 ~~~~ \forall t\notin [ k-1,k ), ~k = 1, \dots , N.\label{Eq_H}
\end{align}
The constraint \eqref{Eq_H} reflects the liquidity issue and trading rule of those markets: 
\begin{itemize}
	\item only the first-nearby forwards are liquid;
	\item a contract that has started to deliver can no longer be traded. 
\end{itemize}
The value process (or the {P \& L}) at time $t\in [k-1,k )$ for $k \in \{1,...,N\}$ is 
\begin{align*}
V_t(\varphi) 
=\eta_t  + \xi_t^\top P_t = \eta_t + \xi^k_t  P^k_t
= \eta_t + \xi^k_t  F(t, k, k+1, X_t).
\end{align*}
The cumulative cost of the hedge up to time $t$ is: 
\begin{align*}
C_t(\varphi) := V_t(\varphi) - G_t(\varphi),
\end{align*}
where $G_t$ denotes the {cumulative gain} of the hedge up to time $t$:
\begin{align}
G_t(\varphi)&= \int_0^t \xi_s^\top dP_s 
= \sum_{i=1}^{k-1} \int_{i-1}^{i} \xi^i_s dP^i_s + \int_{k-1}^{t} \xi^k_s dP^k_s \label{Eq_gain}\\
&= \sum_{i=1}^{k-1} \int_{i-1}^{i} \xi^i_s dF(s, i,i+1,X_s) + \int_{k-1}^{t} \xi^k_s dF(s, k,k+1,X_s)  \nonumber
\end{align}
for $ t\in[k-1,k)$.

 Note that the market is incomplete under the restriction \eqref{Eq_H}, since there are two different Brownian motions, but only one risky asset to invest in at any given time. In an incomplete market a claim generally cannot be fully replicated at maturity by a self-financing hedging strategy. Depending on the restriction on cash account $\eta$, one can either use a strategy that is self-financing but does not perfectly replicate the claim at maturity, or use a strategy that fully replicates the claim at maturity but needs additional investment throughout the hedge, i.e.\ is not self-financing. In the first case, we have residual risk and in the latter case additional cash infusion is needed. Either way, risk cannot be fully eliminated and can only be minimized. In the following we briefly review the concept of risk-minimizing strategy in the sense of F\"ollmer and Schweizer, and then give a rolling hedge that is locally risk-minimizing.
  
\subsection{A locally risk minimizing hedging criterion}

The risk-minimization criterion proposed and developed by F\"ollmer and Schweizer (see e.g.\ \cite{heath2001comparison}, \cite{heath1999}, \cite{schweizer1999guided}, \cite{schweizer1990risk}, \cite{follmer1991hedging} for details), is to minimize the conditional variance $R_t(\varphi)$ of the cost process $C_t(\varphi)$,
\begin{align*}
R_t(\varphi) := \E_\Q\left[ ( C_T(\varphi) - C_t(\varphi) )^2 | \mathcal{F}_t \right],
\end{align*} 
among all not necessarily self-financing strategies $\varphi$ that perfectly replicate $\F$ at maturity:
\begin{align}\label{Eq_LRM1}
 V_\T(\varphi) = \F \quad \text{$\Q$-a.s.}
\end{align}
In our setup, \eqref{Eq_LRM1} is equivalent to $\eta_{T_N}=0$ and $\xi^{N}_{T_N}=1$.

A strategy $\varphi^*$ is called \emph{risk-minimizing} if for any $\varphi$ that satisfies \eqref{Eq_LRM1} we have $R_t(\varphi^*) \leq R_t(\varphi)$, $\Q$-a.s.\ for every $t\in [0, \T]$; see Schweizer (page 545 in \cite{schweizer1990risk}). One can show that any risk-minimizing strategy is mean self-financing, i.e.\ $C_t(\varphi)$ is a $\Q$-martingale. F\"ollmer and Schweizer showed that the existence of such a strategy $\varphi$ is guaranteed if the price process $P_t$ is a $\Q$-local martingale. Moreover, in the martingale case, finding such a strategy is equivalent to finding the Galtchouk--Kunita--Watanabe (GKW) decomposition of $\F$, namely
\begin{align}
\F=\E[\F]+ \int_0^{\T} \xit^\top_s dP_s + \Lt_{\T},\label{Eq_decom}
\end{align}
where $\xit$ is an admissible, predictable process and $\Lt$ is a square-integrable $\Q$-martingale strongly orthogonal to $P$ with $\Lt_0 = 0$. The risk-minimizing hedging strategy $\varphi^{rm}$ is then given by  
\begin{align*}\varphi^{rm}_t=\left(\eta^{rm}_t, \xi^{rm}_t\right)^\top= 
\left( V_t(\varphi^{rm}) -  \xi_t^{{rm}^\top}P_t,\, \xit_t\right)^\top ,
\end{align*}
where the value process is $V_t(\varphi^{rm}) =  \E[\F|\mathcal{F}_t]= \F_t =
\F_0+ \int_0^{t} \xi^{{\F}^\top}_s dP_s + \Lt_{t}$ and the cost process is $C_t(\varphi^{rm})= \F_0 +\Lt_t $. Obviously this risk-minimizing strategy satisfies $V_{\T}(\varphi^{rm}) = \F_{\T}$, and the associated risk process  $R_t(\varphi^{rm})$ is minimal (zero) at $t=\T$.

\subsection{A risk-minimizing rolling hedge}\label{sec:RM-hedge}
Recall that the price process $P_t$ is a $\Q$-martingale. Then the time-$t$ valuation of the long-term electricity commitment $\F_\T$ has a GKW-decomposition as in \eqref{Eq_decom}. We now compute the process $\xit$ in this decomposition. This will give us the hedging strategy. Using \eqref{Eq_decom}, \eqref{Eq_H} and \eqref{Eq_gain}, we obtain for any $t\in [k-1, k)$, $k\in\N$: 
\[
\langle P^k,  \F \rangle_t - \langle P^k,  \F \rangle_{k-1} = \int_{k-1}^t d\langle P^k, \int_{k-1}^{\Bigcdot}\xit^{k}_s  dP^k_{s}\rangle_u + \int_{k-1}^t d\langle P^k, \Lt \rangle_u =  \int_{k-1}^{t}\xit^{k}_s d\langle P^k, P^k \rangle_s,
\]
where $\langle P^k, \Lt \rangle_t=0$ as $\widetilde L$ is orthogonal to $P$, and $\langle P^k,  \F \rangle_{k-1}=0$ as $\F_{k-1}$ is constant and known at $t\geq k-1$.
Rearranging and using \eqref{Eq_sigma} and \eqref{Eq_w} we get the $k$-th component of $\xit_t$ for $t\in [k-1, k)$:
\begin{align}
\xit^{k}_t &=\frac{d\langle P^k, \F \rangle_t}{d\langle P^k, P^k \rangle_t}\nonumber \\
&= \frac{\vec w_{01}^\top  {e^{(\T-t) G}}^\top d\langle H(X),H(X)\rangle_t \,e^{(k-t)G} \vec w_{01}}{\vec w_{01}^\top {e^{(k-t) G}}^\top d\langle H(X),H(X)\rangle_t  \,e^{(k-t) G} \vec w_{01}} \nonumber\\
&=\frac{\vec w_{01}^\top {e^{(\T-t) G}}^\top \Sigma(X_t) \, e^{(k-t)G} \vec w_{01}}{\vec w_{01}^\top  {e^{(k-t) G}}^\top \Sigma(X_t) \, e^{(k-t) G} \vec w_{01}} .\nonumber
\end{align}
Therefore, the risk-minimizing hedging strategy of the tradable assets is given by
\begin{align}
\xi^{rm}_t &= \left( \xi_t^{rm, 1}, ..., \xi_t^{rm, N} \right)^\top, \label{Eq_xi}\\
\intertext{where} 
\xi^{rm,k}_t &= 
\begin{cases}
 \xit^{k}_t , ~~&\text{for }~ t\in[k-1,k);\\
  0, ~~~ &\text{otherwise. }
\end{cases}\label{Eq_xik}
\end{align}
And thus, for $t\in[k-1,k)$, the cash account $\eta^{rm}_t$ is then given by
\begin{align*}
\eta^{rm}_t = V_t(\varphi^{rm}) - \xi^{rm^\top}_t P_t = \F_t - \xit^{k}_t P^k_t,
\end{align*} 
and the associated cost process is 
\begin{align*} 
C_t(\varphi^{rm}) = \F_t- \int_0^t \xi^{{rm}^\top}dP_s. 
\end{align*}

\begin{rem}
The risk minimizing strategy also minimizes the quadratic covariation between the claim and the value of hedge without the cash account. Indeed, formally one has
\begin{align*}
	 \min_{\xi} d\langle \F - \xi^k P^k \rangle_t
   = \min_{\xi} \left( d\langle \F \rangle_t - 2\xi^k_t d \langle \F, P^k \rangle_t +   (\xi^k)^2  d\langle P^k \rangle_t \right)
\end{align*}
This expression is minimized by $\xi^{k}_t =\frac{d\langle P^k, \F\rangle_t}{d\langle P^k, P^k \rangle_t}$ as in \eqref{Eq_xik}.
\end{rem}

{
\section{Empirical analysis}
In this section we demonstrate the use of our polynomial framework for modeling and hedging long-term electricity forwards and analyzing their performance. Based on a time series of real observations of power forwards provided by Axpo Solutions AG, we estimate parameters of a model specification. Further we simulate forward curves and investigate the quality of risk-minimizing hedges over various time horizons.   

\subsection{The data}\label{subsec_1}

Electricity long-term contracts lack liquidity and are not available on exchange.\footnote{People usually refer to contracts with more than 2-3 years time to maturity/start of delivery as \emph{long-term} contracts.} In fact, long-term forwards with delivery periods are only offered by a small group of market participants over the counter (OTC), mostly by energy producing and trading companies. 

The data we use are provided by Axpo Solutions AG, and come originally from Totem Markit service, which surveys prices of various electricity contracts from each member firm and in term provides market consensus prices. More concretely, the data are German calender-year baseload (Cal) forwards that are quoted monthly from January 2010 to April 2018.\footnote{Note that German Cal Base forwards are the most liquidly traded contracts among all illiquid long-term contracts.} On each quotation date, we have at most 10 quoted contracts, i.e.\ first to tenth nearby Cal forwards. For each quoted contract on each quotation date, we have consensus price and the price spread between the highest quoted price and the lowest quoted price. A visualization of consensus prices is given in  \sref{Figure}{fig:ds}. 

\begin{figure}[ht]
\begin{subfigure}{0.49\textwidth}
\includegraphics[scale=0.24]{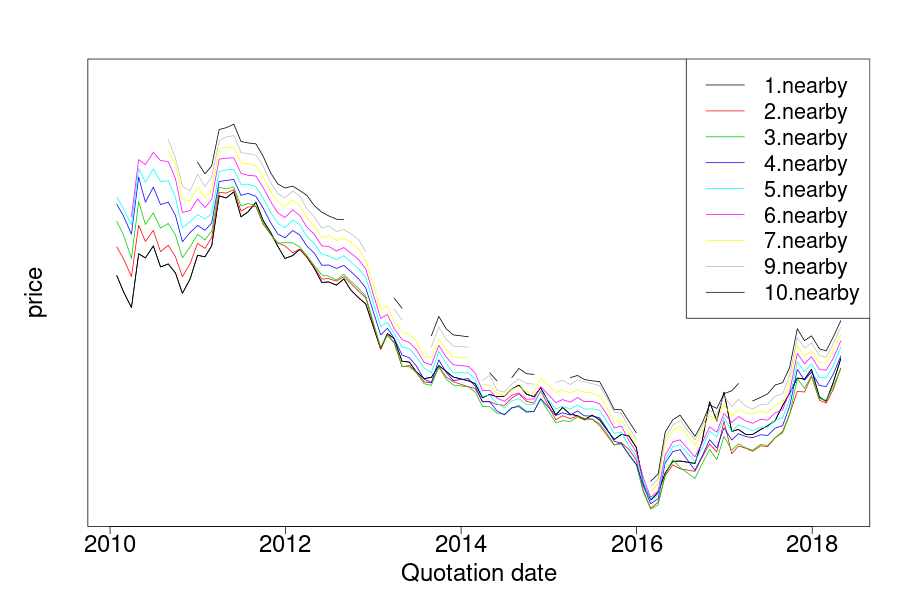}
\caption{rolling forwards} \label{fig:ds_a}
\end{subfigure}
\begin{subfigure}{0.49\textwidth}
\includegraphics[scale=0.24]{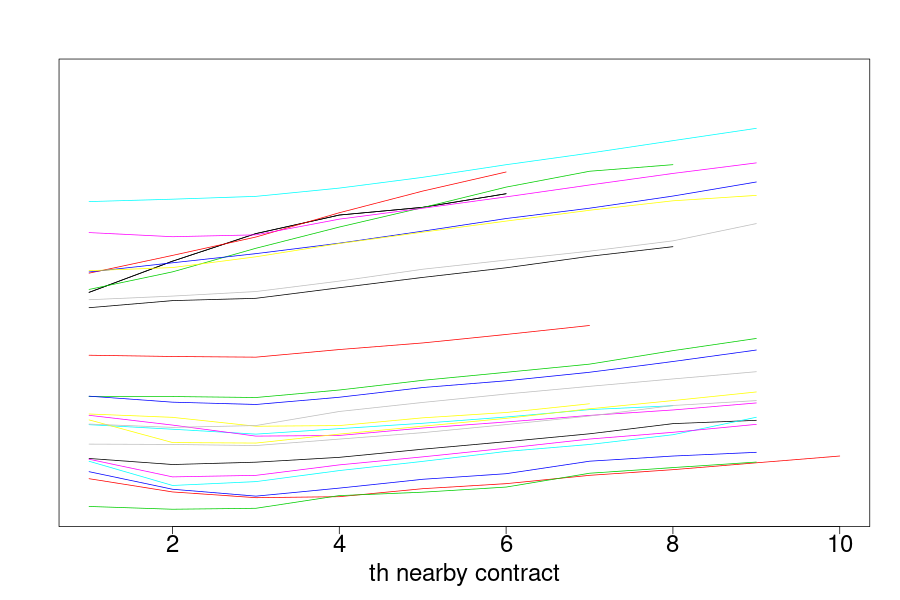}
\caption{a selection of forward curves} \label{fig:ds_b}
\end{subfigure}
\caption{~(color online). German Calendar-year Baseload forward from January 2010 to April 2018. Y-axes are removed for data protection. Figure (a) shows the dynamics of each nearby Cal forward contract with respect to quotation date. We see that not every contract is available on every quotation date. In Figure (b), 
each curve is the forward curve of a quotation date, i.e.\ each curve shows the prices of the first to at most tenth nearby Cal forwards of that date. For the sake of a clearer view, we take a selection of forward curves. These curves (of chosen quotation dates) are stacked and time-lagged into a day. We note two shapes of forward curves: a straight contango curve and a curve which is flat with slight backwardation at the front and contango at the back end of the curve.   
}\label{fig:ds}
\end{figure}

\subsection{Model estimation}\label{subsec_KF}

In order to capture the dynamics of the forward curves with our model, a non-linear filter is needed for model estimation, as the forward prices are quadratic in the Gaussian underlying factor process $X_t$. Recall that the fundamental assumption of Kalman filter is that the measurement space is linear and Gaussian in the state space. Thus, in order to work with a Kalman filter, we can either linearize the quadratic relationship between state and measurement. This leads to a so-called extended Kalman filter. Alternatively, we can augment the state to incorporate the linear and quadratic terms of $X_t$, so that the measurements become linear in the augmented state.

Inspired by the work of \cite{monfort2015quadratic}, we will use a time-dependent version of the latter approach to estimate a discrete version of \sref{Specification}{Sp_1}} based on the data from \sref{Section}{subsec_1}. The estimation will be under $\mathbb{P}$, which means that we also need to estimate the market price of risk parameters.

Note that we do not have direct access to the underlying state process $X_t$ through the available data. Indeed, at each quotation date $t_k$, we only see the prevailing price $F^j_k$ of the $j$-th nearby forward contract, with $j=1,\ldots,10$.\footnote{Actually, we see even less, since price data is often missing for longer maturities.} We view $F^j_k$ as a noisy observation of the model price. More precisely, we assume that
\[
F^j_k = F(t_k,T_j,T_j+1) + N^j_k \eta^j_k,
\]
where $F(t_k,T_j,T_j+1)$ is the model price computed using \sref{Proposition}{prop_fwd}, $\eta^j_k$ are iid standard Gaussian noise, modulated by some parameters $N^j_k>0$. The role of $N^j_k$ is to encode the trustworthiness of the price of the $j$-th nearby contract on quotation date $t_k$. A large value means that the price is considered noisy and uncertain, and a small value that the price is considered accurate. The $N^j_k$ are chosen based on the spreads $\delta^j_k$ between the highest and lowest quoted price for the $j$-th nearby contract on date $t_k$. Specifically, we use
\[
(N^j_k)^2 = \frac13 \times \delta^j_k + \frac13 \times \delta^j + \frac13 \times \delta,
\]
where $\delta^j$ denotes the time series average of the spreads $\delta^j_k$ for a fixed maturity $j$, and $\delta$ denotes the overall average of all the spreads $\delta^j_k$. The use of iid noise corresponds to assuming that our model captures all systematic effects. This is a standard assumption to reduce the complexity of the estimation.

\medskip
\paragraph{A quadratic Kalman filter for \sref{Specification}{Sp_1}}
We will now overload notation in the following manner: we write $X_k$ for the state $X_{t_k}$ at quotation date $t_k$, and similarly for other quantities that depend on time.

Since model prices at date $t_k$ are quadratic in the state $X_k$, we have the expression
\[
F^j_k = a^j_{k} + B^j_{k} X_{k} + X_k^\top C^j_{k} X_{k} + N^j_k \eta^j_k
\]
for some $a^j_{k}\in\R$, $B^j_{k}\in\R^{2}$ and $C^j_k\in\S^2$ that can be deduced from the pricing formula in \sref{Proposition}{prop_fwd}. In view of \eqref{Eq_H1}, and following \cite{monfort2015quadratic}, we observe that $F^j_k$ is affine in the augmented state vector
\[
\widetilde X_k = (Z_k,\ Y_k,\ Z_k^2,\ Y_kZ_k,\ Y_k^2)^\top.
\]
Specifically, the vector of prices, $F_k=(F^1_k,\ldots,F^{10}_k)^\top$ is given by
\[
F_k = a_k + \widetilde{B}_k \widetilde{X}_{k} + N_k\eta_{k},
\]
where $a_k:=(a^1_k,\ldots,a^{10}_k)^\top$ and $\widetilde{B}_k:= (\widetilde{B}^1_k,\ldots, \widetilde{B}^{10}_k)^\top$ can be computed as follows: for each maturity $j=1,\ldots,10$, we have
\[
\begin{pmatrix} a^j_{k} \\ \widetilde{B}^{j}_{k} \end{pmatrix} :=e^{(T_j-{t}){G}}\int^1_0 e^{u{G}}du \,\vec{p}_S,
\]
with $\vec p_S$ from (\ref{Eq_pS1}) and $G$ from (\ref{Eq_G1}). Moreover, we have defined $N_k :=\diag\left( N^1_k,\ldots, N^{10}_k\right)$ and $\eta_k:=(\eta^1_k,\ldots,\eta^{10}_k)^\top$. Next, the discretized (non-augmented) state dynamics is given by
\[
X_{k} = b + D X_{k-1}+K\varepsilon_{k}
\]
where $\varepsilon_{k}$ are independent bi-variate standard Gaussians and
\begin{align*}
 b &= \scalemath{0.95}{
 \begin{pmatrix}
     \gamma_Z\Delta t \\
     \gamma_Y \Delta t \\ 
    \end{pmatrix}},
  D = \scalemath{0.95}{\begin{pmatrix}
     1-(\kappa_Z - \lambda_Z)\Delta t & 0 \\
       \kappa_Y\Delta t & 1-(\kappa_Y- \lambda_Y)\Delta t \\
    \end{pmatrix}},
K = \scalemath{0.95}{\begin{pmatrix}
\sigma_Z \sqrt{\Delta t} & 0\\
\rho\sigma_Y \sqrt{\Delta t} & \sigma_Y\sqrt{(1 - \rho^2)\Delta t}   
\end{pmatrix}}.
\end{align*}
Here we use the market price of risk parameters $\Lambda = \diag(\lambda_Z, \lambda_Y)$ and $\gamma = (\gamma_Z, \gamma_Y)^\top$ from \sref{Section}{sec_lambda}. The discretized dynamics of the augmented state $\widetilde X_k$ is
\begin{align*}
\widetilde{X}_{k} = \widetilde{b}(X_{k-1}) + \widetilde{D} \widetilde{X}_{k-1} + \widetilde{K}(X_{k-1}) \varepsilon_{k},
\end{align*} 
where the involved quantities are conveniently expressed using the standard vector stacking operator $Vec()$, Kronecker product $\otimes$, selection matrix $H_d$, and duplication matrix $G_d$. The resulting expressions are:
\begin{align*}
\widetilde{b}(X_{k-1}) &= 
\begin{pmatrix} 
b \\ H_2 Vec(b b^\top + \Sigma )
\end{pmatrix}, \quad
\widetilde{D} = 
\begin{pmatrix}
D & 0  \\ 
H_2( b\otimes D+ D \otimes b) G_2 & H_2 ( D\otimes D )G_2 
\end{pmatrix},\\
\Gamma_{k-1}&= I_2\otimes (b+DX_{k-1}) + (b+DX_{k-1})\otimes I_2,\\
\widetilde{\Sigma} (X_{k-1}) &= 
    \begin{pmatrix}
    \Sigma &    \Sigma \, \Gamma_{k-1}^\top H_2^\top  \\
    H_2 \,\Gamma_{k-1} \,\Sigma  &  H_2 \Gamma_{k-1}\Sigma \Gamma_{k-1}^\top H_2^\top + H_2 (I_4+ \Lambda_2)(\Sigma\otimes \Sigma)H_2^\top 
    \end{pmatrix},
\end{align*}
where $\Sigma:=KK^\top$ and $I_d$ is the identity matrix of size $d$, and $\Lambda_m $ is the standard commutation matrix of size ${m^2\times m^2}$. We then let $\widetilde{K}(X_{k-1})$ be the Cholesky factor of $\widetilde{\Sigma}(X_{k-1})$, i.e., $\widetilde{K}(X_{k-1})\widetilde{K}(X_{k-1})^\top= \widetilde{\Sigma}(X_{k-1})$. We finally define $\mathcal{F}_{k-1}:= \sigma(F_{k-1},F_{k-2},..., F_{1} )$. The filtering algorithm is then described in \sref{Algorithm}{Algo_qKF}, where we use the notation
\begin{align*}
\widetilde{X}_{k|k-1} :&= \mathbb{E}[\widetilde{X}_{k}|\mathcal{F}_{k-1}], & \widetilde{V}_{k|k-1} &:=\mathbb{V}[\widetilde{X}_{k}|\mathcal{F}_{k-1}],\\
{F}^j_{k|k-1} :&= \mathbb{E}[F^j_{k}|\mathcal{F}_{k-1}], & M^j_{k|k-1} &:=\mathbb{V}[F^j_{k}|\mathcal{F}_{k-1}].
\end{align*}

\begin{algorithm}[H]
\caption{Quadratic Kalman filtering algorithm}
\begin{algorithmic}
\STATE {Anchoring}: 
\STATE \eqmakebox[cse][l]{$~~\widetilde{X}_{1|1}$} $=\widetilde{x}_0 = (x_0^\top, H_2 Vec(x_0 x_0^\top))^\top=(z_0, y_0, z^2_0, y_0z_0, y^2_0)^\top $,  
\STATE \eqmakebox[cse][l]{$~~\widetilde{V}_{1|1}$} $= \widetilde{\Sigma}({x}_{0})$.
\STATE {State prediction}:
\STATE \eqmakebox[cse][l]{$~~\widetilde{X}_{k|k-1}$} $= \widetilde{b}(X_{k-1|k-1}) + \widetilde{D} \widetilde{X}_{k-1|k-1} $,
\STATE \eqmakebox[cse][l]{$~~\widetilde{V}_{k|k-1}$} $= \widetilde{D}  \widetilde{V}_{k-1|k-1} \widetilde{D}^\top + \widetilde{\Sigma}(X_{k-1|k-1}) 
$.
\STATE {Measurement prediction}:
\STATE \eqmakebox[cse][l]{$~~F_{k|k-1}$} $= a_k+ \widetilde{B}_{k}\widetilde{X}_{k|k-1}$.
\STATE \eqmakebox[cse][l]{$~~M_{k|k-1}$} $= \widetilde{B}_{k} \widetilde{V}_{k|k-1}  \widetilde{B}^{\top}_{k} +N_k {N}_k^\top $.
\STATE \eqmakebox[cse][l]{$~~\mathcal{C}_{k}$} $= (F^{\,\text{real}}_{k} - F_{k|k-1})$ gives the prediction error.
\STATE {Update}: 
\STATE \eqmakebox[cse][l]{$~~\mathcal{K}_{k}$} $= \widetilde{V}_{k|k-1}\widetilde{B}^\top_{k} M_{k|k-1}^{-1} $ gives the gain matrix, 
\STATE \eqmakebox[cse][l]{$~~\widetilde{X}_{k|k}$} $=\widetilde{X}_{k|k-1} + \mathcal{K}_{k}\mathcal{C}_{k}  $,
\STATE \eqmakebox[cse][l]{$~~\widetilde{V}_{k|k}$} $= \widetilde{V}_{k|k-1} -\mathcal{K}_{k} M_{k|k-1}\mathcal{K}^\top_{k}= (\mathbbm{1}-\mathcal{K}_{k}\widetilde{B}_{k})\widetilde{V}_{k|k-1} $,
\STATE \eqmakebox[cse][l]{$~~F^j_{k|k}$} $= a_{k}+ \widetilde{B}_{k}\widetilde{X}_{k|k}$. 
\end{algorithmic}
\label{Algo_qKF}
\end{algorithm}

\medskip
\paragraph{Optimization with the quadratic Kalman filter for \sref{Specification}{Sp_1}}
For the model estimation with the quadratic filter, we use both the Least-Squares (LS) and the Maximum Likelihood (ML) criteria. We start with LS, as it is robust and converges fast. Once a stable result is obtained, we apply ML to obtain further improvement. Moreover, we impose $1\geq\kappa_Y \geq\kappa_Z \geq 0$ on the parameters, in line with the interpretation that $Y_t$ and $Z_t$ drive the short and the long end of the forward curve respectively and thus mean-revert at different speed. The filtered underlying process $X_t=(Z_t,Y_t)^\top$ is given in \sref{Figure}{fig:underlying}. The estimated parameters are shown in \sref{Table}{tab:param}.

\begin{figure}[H]
\centering
\includegraphics[scale=0.315]{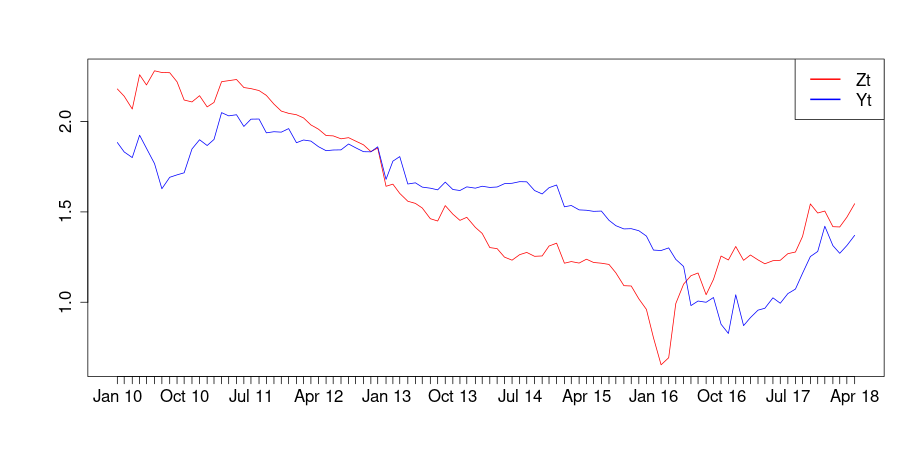}
\caption{~(color online). The filtered underlying dynamics $X_t=(Z_t,Y_t)^\top$ of \sref{Specification}{Sp_1}. 
}
\label{fig:underlying}
\end{figure}

\begin{table}[H]
\begin{center}
\begin{tabular}{m{2cm} | r  }
  \hline\hline
 $c$ & $0.239614$  \\ \hline
 $\alpha$ & $10.250035$ \\  \hline
 $\beta$ & $0.176807$ \\ \hline 
 $\kappa_Z$ & $0.010022$\\ \hline 
 $\kappa_Y$ & $0.400207$\\ \hline 
 $\sigma_Z$ & $0.406479$\\ \hline 
 $\sigma_Y$ & $0.889130$\\ \hline 
 $\rho$ & $0.112439$\\ \hline
 $\lambda_Z$ & $0.089990$\\ \hline
 $\lambda_Y$ & $0.111842$\\ \hline
 $\gamma_Z$ & $0.086791$\\ \hline
 $\gamma_Y$ & $0.127365$\\ \hline
 $z_0$ & $2.358048$\\ \hline
 $y_0$ & $2.007557$ \\ \hline
 \end{tabular} 
 \caption{Estimated parameters of \sref{Specification}{Sp_1}.}\label{tab:param}
\end{center}
\end{table}

In the implementation we use the R package DEoptim, which is an optimizer based on a differential evolution algorithm; see \cite{Storn1997}, \cite{DEoptim2006} for details of the algorithm and \url{https://cran.r-project.org/web/packages/
DEoptim/index.html}, \cite{DEoptim2011c}, \cite{DEoptim2016}, \cite{DEoptim2011a} \cite{DEoptim2011b} for use of the package.

\sref{Figure}{fig_me} gives a visualization of the model estimation using \sref{Specification}{Sp_1}. We quantify the goodness of fit in terms of relative errors, both cross-sectionally at each quotation date (\sref{Figure}{fig_boxplot2}), and across time for each nearby forward contract (\sref{Figure}{fig_boxplot1}). The overall relative error, i.e.\ the average relative error across all contracts and quotation dates, is as low as $0.661\%$, indicating a very good model fit.

\begin{figure}[ht]
\begin{subfigure}{0.49\textwidth}
\includegraphics[scale=0.24]{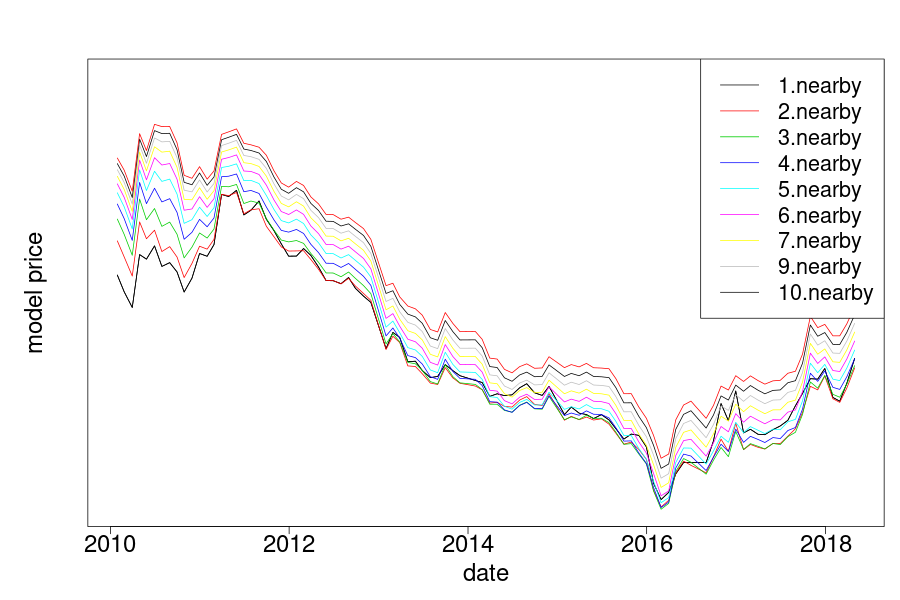}
\caption{model estimated rolling forwards} 
\end{subfigure}
\begin{subfigure}{0.49\textwidth}
\includegraphics[scale=0.24]{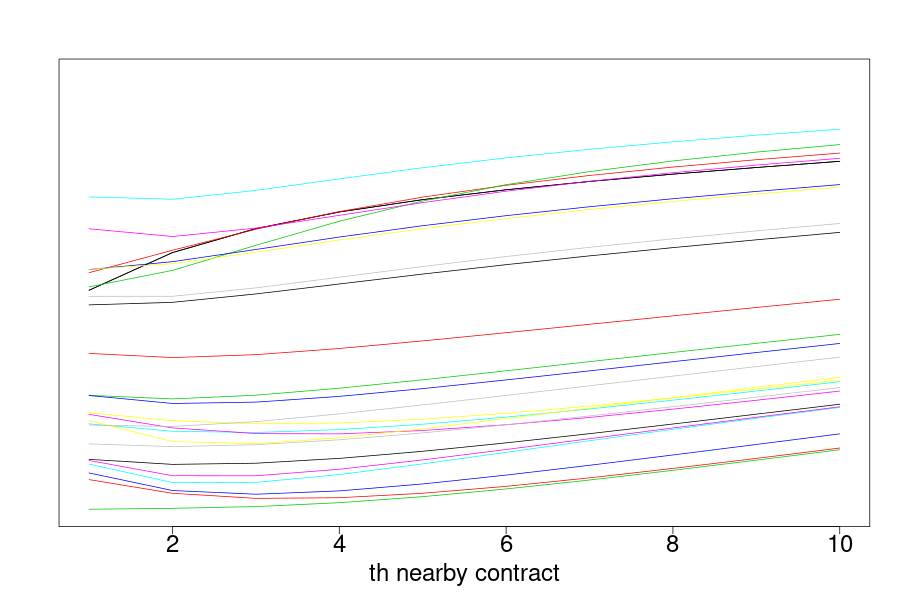}
\caption{a selection of model estimated forward curves} 
\end{subfigure}
\caption{~(color online). Forward curves from \sref{Specification}{Sp_1} using estimated parameters: in (a) each nearby forward is shown as a time series; in (b) each curve is a forward curve at a particular quotation date (same date selection as in \sref{Figure}{fig:ds_b}). Y-axes are removed for data protection. Comparing these figures with the real observations (\sref{Figure}{fig:ds}), we find that the model captures the shapes and dynamics of the time series observation of electricity forward curves well.}
\label{fig_me}
\end{figure}

In \sref{Figure}{fig_boxplot2} we notice a single spike of the time series of averaged errors reaching almost $2\%$ (on a quotation date in February 2016). This is due to a single dramatic price drop of a forward curve on that date that is moderately captured by our model as it is continuous and gives smooth prices.

Looking at the estimation of the time series of each nearby forward (\sref{Figure}{fig_boxplot1}), we find that the front end fit (i.e. the first nearby to the sixth nearby forward contract) works very well while the fit deteriorates for longer maturities. This occurs by construction, as the prices of contracts with very long time-to-maturity are less reliable than those on the front end of the forward curve. In the filter this is captured by the data variance $N^j_t$, which is influenced by the time series of price spread of each forward; in general $N^j_t$ tends to be higher for longer time-to-maturity (i.e. larger $j$).

\begin{figure}[ht!]
\begin{subfigure}{0.49\textwidth}
\includegraphics[scale=0.24]{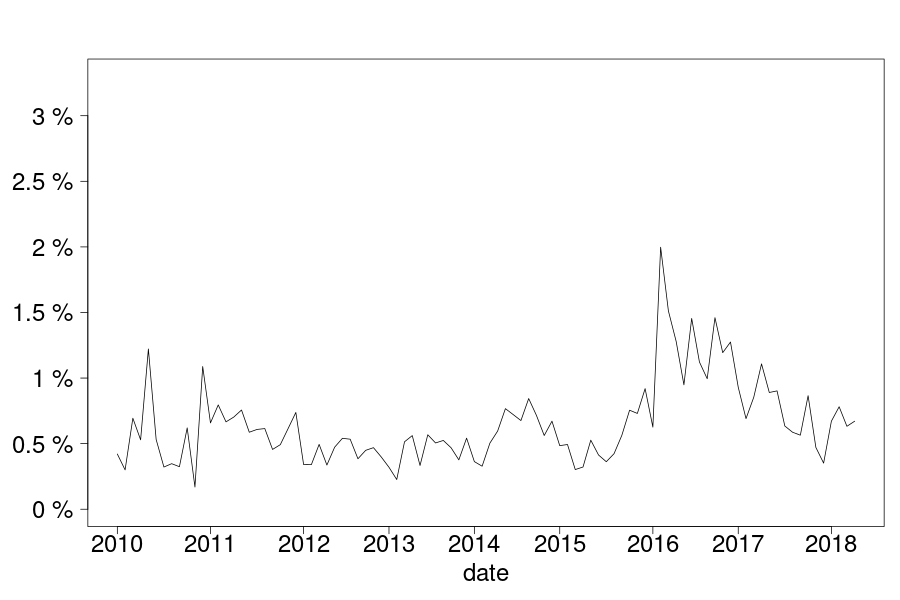}
\caption{averaged relative errors with respect to \\quotation date} \label{fig_boxplot2}
\end{subfigure}
\begin{subfigure}{0.49\textwidth}
\includegraphics[scale=0.24]{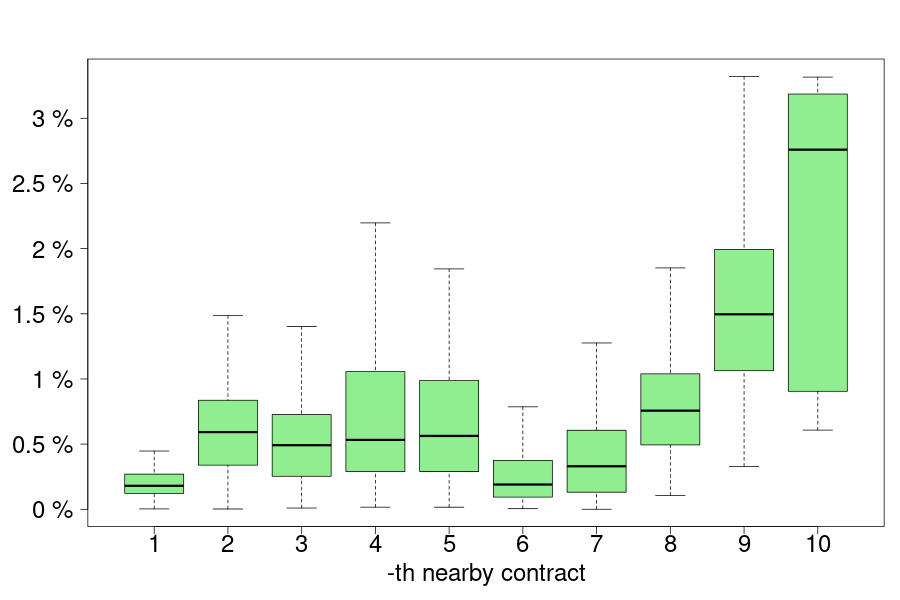}
\caption{first to third quantile of time series of relative errors with respect to rolling contract} \label{fig_boxplot1}
\end{subfigure}
\caption{~(color online). Relative errors of model estimation. The overall relative error (averaged over all contracts and all quotation dates) is $0.661\%$. In (a) the averaged relative error of forward curve on each quotation date is shown. The spike in February 2016 is caused by a large downward drop of the observed forward curve, leading the model to deviate $2\%$ on average on that date. In (b) the distribution of relative pricing errors for each nearby contract over time is given in boxplot: each whisker gives the range from mininum value to maximum value of the time series of relative errors for that contract (outliers are removed). Each green box marks the 25th to 75th percentile of the time series. The thick black line marks the median relative error. In addition to (b), the time averaged relative errors and standard deviations for each contract are given in the table below. We see that the first to sixth nearby contracts are well estimated by the model, while the seventh to tenth nearby contracts have much larger estimation errors. This occurs by construction. The real data on the back end of the forward curve are very rare and thus have a huge price uncertainty; in particular the tenth nearby contract was only available on four quotation dates on over nine years of monthly quotation data. The uncertainty of real data is captured by the parameters $N^j_t$ for each $j$-th nearby rolling contract in the quadratic filter.}
\label{fig_boxplot}
\begin{center}
\begin{tabular}{c|c| c| c|c|c } 
 nearby contract &1& 2& 3 &4 &5\\ \hline \hline 
 av. rel. error& 0.2162\% &  0.6211\% &  0.5666\% &  0.7362\% &  0.6990\%
\\ \hline
std(rel. error) &  0.1741\% &  0.4036\% &  0.4401\% &  0.6223\% &  0.5560\% \\ \hline \hline
  nearby contract &6 &7& 8 & 9 & 10\\ \hline
av. rel error &  0.3355\% &  0.4509\% &  0.8530\% &  1.6583\% &  2.1549\% \\ \hline
std(rel. error)  & 0.4397\% & 0.4330\% & 0.5986\% & 0.9784\% & 1.2975\%  \\ \hline
  \end{tabular} 
\end{center}
\end{figure}

We also performed model estimation under $\mathbb{Q}$. This is equivalent to assuming $\mathbb{P}=\mathbb{Q}$, meaning that the market price of risk is zero ($\lambda(X_t)=0$). This produces different parameters than those in \sref{Table}{tab:param}, but the fit remains remarkably good. 


\subsection{Simulation and hedging analysis}

In the following, we simulate forward surfaces, run locally risk-minimizing hedging strategies on those, and analyze their performance with respect to different hedging horizons. 
 
\medskip
\paragraph{Simulation of forward surfaces}
With a given set of parameters, we generate samples of entire forward surfaces over a fixed time horizon $\T$. This can be done efficiently by first simulating the $\mathbb{P}$-dynamics of the underlying process $X_t= (Y_t, Z_t)^\top$ until year $\T$ using a simple Euler scheme (with, say, $N$ discretization steps). We can then compute the forward price for the $1$-st through $L$-th nearby contract at each point $t\le\T$ on the time grid by applying the pricing formula, \sref{Proposition}{prop_fwd}. The complexity of simulating $M$ evolutions of forward curves is of the order $\O((M \times N)^L)$. A brief pseudo code is given in \sref{Algorithm}{Algo_fsurface}. 

\begin{algorithm}[h]
\caption{Simulate forward surfaces under $\mathbb{P}$ (with market price of risk)}
\begin{algorithmic}
\REQUIRE $ \varepsilon^Y_j,\varepsilon^Z_j\stackrel{iid}{\sim} \mathcal{N}(0,1)$, $j = 1,..., N$, $\T$, $M$, $N$, $L$ and all model parameters (see e.g. Table \ref{tab:param}).  
\ENSURE $M$ simulated forward surfaces over $\T$ years. 
\STATE $\Delta t = T/N$
\STATE $Y_{0} = y_0$ 
\STATE $Z_{0} = z_0$
\STATE $H(X_{0}) = (1,z_0, y_0, z_0^2, y_0z_0, y_0^2 )^\top$
\FOR{$l =1,...,L$} 
	\STATE $F_{0}^l = H(X_{0}) \, e^{lG}\, \vec w_{0,1} $ 
\ENDFOR
\FORALL{ $M$ simulations }
	\FOR{$j = 1,..., N$}
		\STATE	$Z_{j} = \gamma_Z\Delta t +\left(1 - (\kappa_Z-\lambda_Z)\Delta t\right) Z_{j-1} + \sigma_Z\sqrt{\Delta t }\,\varepsilon^Z_{j} $
		\STATE	$Y_{j} = \gamma_Y\Delta t + \kappa_Y\Delta t Z_{j-1} + (1-(\kappa_Y-\lambda_Y)\Delta t) Y_{j-1}+ \sigma_Y \sqrt{\Delta t } (\rho  \varepsilon^Z_{j} +\sqrt{1-\rho^2} \varepsilon^Y_{j} )$
		
		\STATE $H(X_{j}) = (1,Z_{j}, Y_{j}, Z_{j}^2, Y_{j} Z_{j}, Y_{j}^2 )^\top$ 
		\FOR{$l =1,...,L$}
			\STATE  $F_{j}^l = H(X_{j})\, e^{(l-(j\Delta t \mod 1 ))G} \, \vec w_{0,1} $
		\ENDFOR	
	\ENDFOR
\ENDFOR
\end{algorithmic}
\label{Algo_fsurface}
\end{algorithm}

\medskip
\paragraph{Simulation study of hedging performance}
We aim to evaluate hedging performance by comparing the unhedged exposures with exposures when we use the locally risk-minimizing rolling hedges from \sref{Section}{sec_hedging} on different hedging horizons. For this, we consider different claims $F(t, T):=F(t, T, T+1, X_t)$ with $T = 2,\ldots,10$ years. Next, we simulate $M=5000$ forward curve evolutions using the estimated parameters from \sref{Table}{tab:param}. For the Euler discretization we use 120 time points per year. For the hedging we use a monthly rebalancing frequency. Finally, we compare the percentage exposure if left unhedged, i.e.
$$\dfrac{F(T,T) - F(0,T)}{F(0,T)},$$
with the percentage exposure if hedged, i.e.
$$\dfrac{F(T,T) - F(0,T) - \int_0^T {\xi^{rm}_t}^\top dP_s }{F(0,T)},$$ with $\xi^{rm}_s$ from \eqref{Eq_xi}--\eqref{Eq_xik} and $P_s$ from \eqref{Eq_Pt}--\eqref{Eq_Pt1}. 
A visual comparison of those exposures (hedged versus unhedged) with respect to different hedging horizons is given in \sref{Figure}{fig_simhedgeP}. We see that the distribution of the exposure widens with increasing hedging horizon, and that the sample standard deviation and skewness go up; see the table below \sref{Figure}{fig_simhedgeP}. The exposure is significantly higher if left unhedged. Moreover, in all cases, the locally risk-minimizing rolling hedge significantly reduces, but does not eliminate, the variance and skew of long-term exposures.

\begin{figure}[ht]
\begin{center}
\includegraphics[scale=0.6]{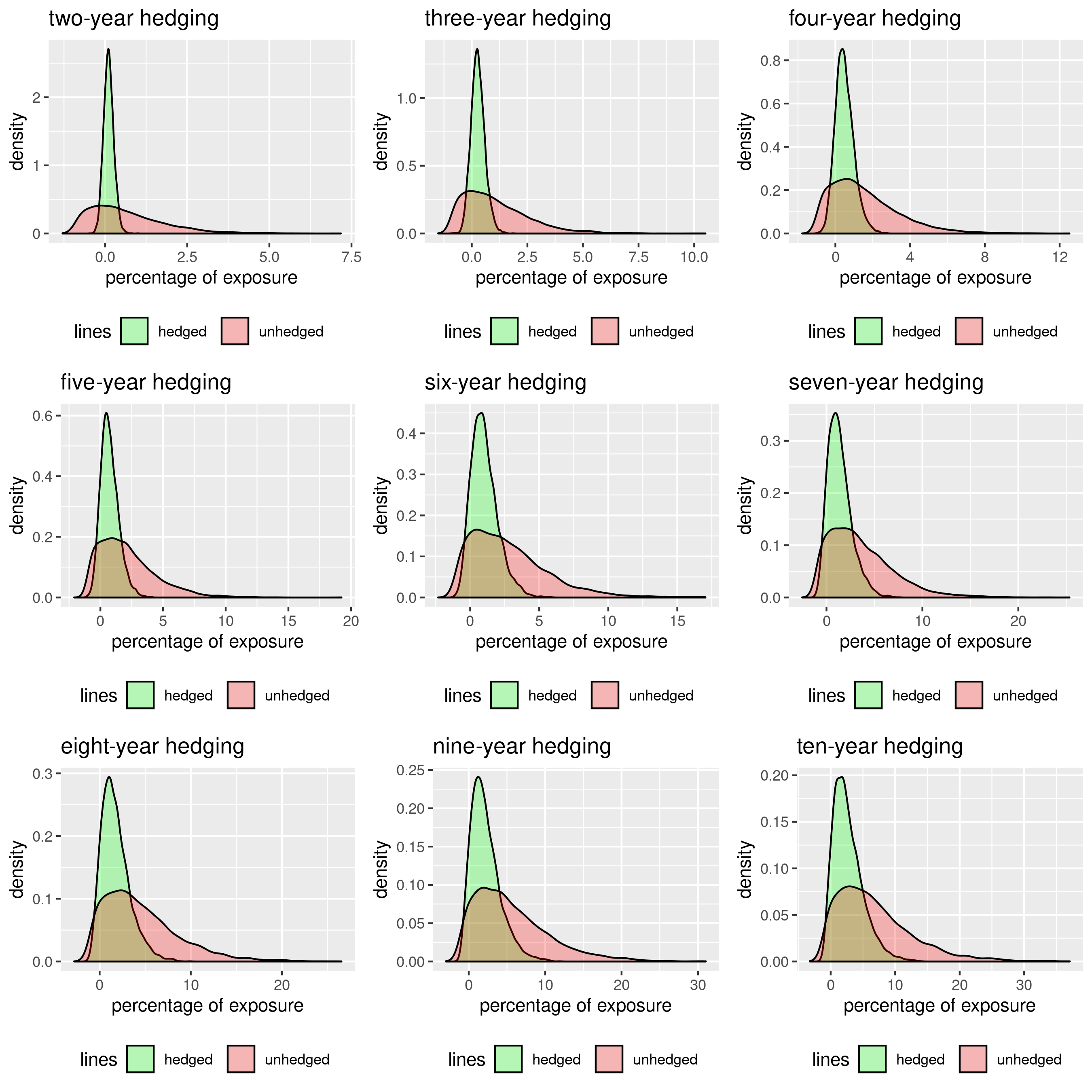}
\caption{~(color online). Density of hedged exposure (green) versus that of unhedged exposure (red) with respect to different hedging horizons. Forwards are simulated using the $\mathbb{P}$-dynamics and market price of risk. In each figure, a different forward is simulated such that the time to maturity corresponds the hedging horizon: i.e. in top left figure, we simulate a forward that matures and starts delivery in two years and compute the exposure at maturity; we then compute a risk-minimizing hedge (with two years hedging horizon), the hedged exposure, and obtain the comparison. Standard deviations and skewnesses are reported in the table below.}
 \label{fig_simhedgeP}
 \begin{tabular}{c| c|c l|c|c l}
  \multirow{2}{*}{hedging horizon} & \multicolumn{3}{c|}{ hedged} & \multicolumn{3}{c}{unhedged}\\ \cline{2-7}
 & std & skew && std & skew & \\ \hline \hline
 2 years & 0.1532 & 0.2728 && 1.1278 & 1.1724 \\
 3 years & 0.3099 & 0.3658 && 1.4700 & 1.2107 \\
 4 years & 0.4959 & 0.5477 && 1.8143 & 1.1738 \\
 5 years & 0.7125 & 0.6992 && 2.2762 & 1.2201 \\
 6 years & 0.9583 & 0.8474 && 2.8011 & 1.2439 \\
 7 years & 1.2266 & 0.9061 && 3.3729 & 1.2361 \\
 8 years & 1.5406 & 1.0017 && 4.0898 & 1.1926 \\
 9 years & 1.8991 & 1.0625 && 4.8472 & 1.1660 \\
 10 years& 2.2982 & 1.0777 && 5.7729 & 1.2224\\ \hline
 \end{tabular}
 \end{center}
\end{figure}


\clearpage

\appendix

\section{Explicit computation of $\boldsymbol{\int^t_0 e^{Gs}ds}$}\label{Appen_eGs}
The $G$-matrices arising in both specifications have a zero first column, and are therefore not invertible. This is in general the case when $1$ is part of the basis $H(x)$, as $\G 1 = 0$. Moreover, if we remove the first row and column of $G$, the submatrix $G^{\prime}$ is invertible and upper-triangular. In the following we show a straightforward way to compute $\int^t_0 e^{Gs}ds$ for such $G$, which helps to reduce the computational effort of evaluating the pricing formula.

\begin{prop}\label{prop_intG} Let A be an upper triangular matrix of the form
\[
A = \begin{pmatrix} 0 & b^\top \\ \vec 0 & C \end{pmatrix}
\]
for some vector $b$ and upper triangular invertible matrix $C$. Then
\[
e^{At} = \begin{pmatrix} 1 & b^\top C^{-1}(e^{Ct}-I) \\ \vec 0 & e^{Ct} \end{pmatrix}
\quad\text{and}\quad
\int_0^t e^{As}ds = \begin{pmatrix} t & b^\top (C^{-1})^2(e^{Ct}-I)-t b^\top C^{-1} \\ \vec 0 & C^{-1}(e^{Ct}-I) \end{pmatrix}
\]
\end{prop}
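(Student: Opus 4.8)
The plan is to exploit the block-triangular structure, which is preserved under taking powers and hence under the matrix exponential. First I would compute $A^n$ for $n\ge 1$ by induction: from $A=\begin{pmatrix}0 & b^\top \\ \vec 0 & C\end{pmatrix}$ one checks $A^n=\begin{pmatrix}0 & b^\top C^{n-1} \\ \vec 0 & C^n\end{pmatrix}$, the induction step being a one-line block multiplication. Summing the exponential series $e^{At}=I+\sum_{n\ge 1}\frac{t^n}{n!}A^n$ block-by-block (the series converges absolutely in any matrix norm, so the rearrangement is harmless) gives the $(2,2)$-block $e^{Ct}$, the $(1,1)$-entry $1$, and the $(1,2)$-block $b^\top\sum_{n\ge 1}\frac{t^n}{n!}C^{n-1}$. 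Here I would use invertibility of $C$ to write $\sum_{n\ge 1}\frac{t^n}{n!}C^{n-1}=C^{-1}\sum_{n\ge 1}\frac{t^n}{n!}C^{n}=C^{-1}(e^{Ct}-I)$, which yields the claimed formula for $e^{At}$.

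For the integral I would integrate the just-derived expression entrywise on $[0,t]$. The $(1,1)$-entry gives $t$; the $(2,2)$-block gives $\int_0^t e^{Cs}\,ds=C^{-1}(e^{Ct}-I)$ (again using $C$ invertible); and the $(1,2)$-block gives $b^\top C^{-1}\int_0^t (e^{Cs}-I)\,ds=b^\top C^{-1}\big(C^{-1}(e^{Ct}-I)-tI\big)=b^\top (C^{-1})^2(e^{Ct}-I)-t\,b^\top C^{-1}$, matching the statement.

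Alternatively — and this gives a self-contained verification avoiding any series manipulation — I would simply differentiate the two claimed right-hand sides in $t$ and check the characterizing matrix ODEs: that $\tfrac{d}{dt}$ of the first matrix equals $A$ times that matrix and it equals $I$ at $t=0$, which pins it down as $e^{At}$; and that $\tfrac{d}{dt}$ of the second matrix equals the first matrix and it vanishes at $t=0$, which pins it down as $\int_0^t e^{As}\,ds$. Both checks reduce to $\tfrac{d}{dt}e^{Ct}=Ce^{Ct}$ together with routine block algebra.

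I do not expect a real obstacle here; the only points needing a little care are the absolute convergence that justifies term-by-term manipulation of the exponential series, and the consistent use of the hypothesis that $C$ is invertible when factoring out $C^{-1}$ and $(C^{-1})^2$. Upper-triangularity of $C$ is not actually needed for the identities themselves — it is relevant only because it makes $C$, and hence $C^{-1}$, easy to compute in the intended application to the generator matrices $G'$.
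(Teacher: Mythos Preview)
Your proposal is correct, and in fact your ``alternative'' verification --- differentiating the claimed expression, checking $F'(t)=AF(t)$ with $F(0)=I$, and then integrating block by block --- is precisely the paper's proof. Your primary power-series computation of $A^n$ is an equally valid and slightly more constructive route to the same formula.
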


\begin{proof}
Let $F(t)$ denote the claimed expression for $e^{At}$. One easily checks that $F'(t)=AF(t)$ and that $F(0)$ is the identity. This implies that $F(t)=e^{At}$. The expression for $\int_0^t e^{As}ds$ is easily obtained by integrating each block of $F(t)$.
\end{proof}

\section{Specifications of $\boldsymbol{\Sigma (X_t)}$}\label{append_Sigma}
\paragraph{Instantaneous covariations and correlations in \sref{Specification}{Sp_1}} Equations (\ref{Eq_covwodel}), (\ref{Eq_covwdel}), (\ref{Eq_corwodel}), (\ref{Eq_corwdel}) hold with $H$ from (\ref{Eq_H1}), $\vec p_S$ from (\ref{Eq_pS1}) and $\Sigma (X_t)$ as below:
\begin{align*}
\Sigma_{X_t} = \left(
    \scalemath{0.65}{
    \begin{array}{cccccc}
 0 & 0 & 0 & 0 & 0 & 0\\ \\
 0 & \sigma^2_Z & \rho\sigma_Y\sigma_Z & 2\sigma^2_Z \, Z_t & \sigma^2_Z \, Y_t + \rho\sigma_Y\sigma_Z \, Zt & 2\rho\sigma_Y\sigma_Z \, Y_t\\ \\
 0 &\rho\sigma_Y\sigma_Z  & \sigma^2_Y & 2\rho\sigma_Y\sigma_Z \, Z_t & \sigma^2_Y \, Z_t + \rho\sigma_Y\sigma_Z \, Y_t & 2\sigma_Y^2\, Y_t\\ \\
 0 & 2\sigma^2_Z \, Z_t & 2\rho\sigma_Y\sigma_Z \, Z_t & 4\sigma^2_Z \, Z^2_t & 2\sigma^2_Z \, Y_t Z_t + 2\rho\sigma_Y\sigma_Z \, Z^2_t & 4\rho\sigma_Y \sigma_Z \, Y_t Z_t   \\ \\
 0 & \sigma^2_Z \, Y_t + \rho\sigma_Y\sigma_Z \, Zt  & \sigma^2_Y \, Z_t + \rho\sigma_Y\sigma_Z\, Y_t & 2\sigma^2_Z \, Y_tZ_t + 2\rho\sigma_Y \sigma_Z \, Z^2_t & ~\sigma^2_Z Y^2_t+\sigma^2_Y Z^2_t +2\rho\sigma_Y \sigma_Z \, Y_tZ_t   & 2\rho\sigma_Y \sigma_Z \, Y^2_t + 2\sigma^2_Y \, Y_t Z_t\\ \\
 0 & 2\rho\sigma_Y\sigma_Z\, Y_t & 2\sigma_Y^2\, Y_t & 4\rho\sigma_Y \sigma_Z \, Y_t Z_t  & 2\rho\sigma_Y \sigma_Z \, Y^2_t + 2\sigma^2_Y \, Y_t Z_t & 4\sigma^2_Y \, Y^2_t\\
    \end{array}
    }
  \right)
\end{align*}


 \paragraph{Instantaneous covariations and correlations in \sref{Specification}{Sp_2}} Equations (\ref{Eq_covwodel}), (\ref{Eq_covwdel}), (\ref{Eq_corwodel}), (\ref{Eq_corwdel}) hold with $H$ from (\ref{Eq_H2}), $\vec p_S$ from (\ref{Eq_pS2}) and $\Sigma (X_t)$ as below: 

\begin{align*}
\Sigma_{X_t} =\left(
    \scalemath{0.55}{
    \begin{array}{ccccccc}
 0 & 0 & 0 & 0 & 0 & 0 & 0\\ \\
 0 & \sigma^2_Z & \sigma_Y\sigma_Z\, R_t & 0 &  2\sigma^2_Z \, Z_t & \sigma^2_Z \, Y_t + \sigma_Y\sigma_Z\, R_t  Zt & 2\sigma_Y\sigma_Z\, R_tY_t\\ \\
 0 & \sigma_Y\sigma_Z\, R_t  & \sigma^2_Y & 0 & 2\sigma_Y\sigma_Z \, R_t Z_t & \sigma^2_Y \, Z_t + \sigma_Y\sigma_Z \, R_t Y_t & 2\sigma_Y^2\, Y_t\\ \\
  0 & 0 & 0 & \sigma^2_R(1 - R^2_t) & 0 & 0 & 0\\ \\
 0 & 2\sigma^2_Z \, Z_t & 2\sigma_Y\sigma_Z \, R_t Z_t & 0 & 4\sigma^2_Z \, Z^2_t & 2\sigma^2_Z \, Y_t Z_t + 2\sigma_Y\sigma_Z \, R_t Z^2_t & 4\sigma_Y \sigma_Z \, R_t Y_t Z_t \\  \\
 0 & \sigma^2_Z \, Y_t + \sigma_Y\sigma_Z \, R_t Z_t  & \sigma^2_Y \, Z_t + \sigma_Y\sigma_Z\, R_t Y_t & 0 &  2\sigma^2_Z \, Y_t Z_t + 2\sigma_Y \sigma_Z \, R_t Z^2_t & ~\sigma^2_Z Y^2_t+\sigma^2_Y Z^2_t +2\sigma_Y \sigma_Z \, R_t Y_t Z_t   & 2\sigma_Y \sigma_Z \, R_t Y^2_t + 2\sigma^2_Y \, Y_t Z_t\\ \\
 0 & 2\sigma_Y\sigma_Z\, R_t Y_t & 2\sigma_Y^2\, Y_t & 0 & 4\sigma_Y \sigma_Z \, R_t Y_t Z_t  & 2\sigma_Y \sigma_Z \, R_t Y^2_t + 2\sigma^2_Y \, Y_t Z_t & 4\sigma^2_Y \, Y^2_t
    \end{array}
    }
  \right)
\end{align*}

\section{Correlation of forwards implied by the data}\label{append_corr}

\begin{figure}[th]
\begin{center}
\includegraphics[scale=0.4]{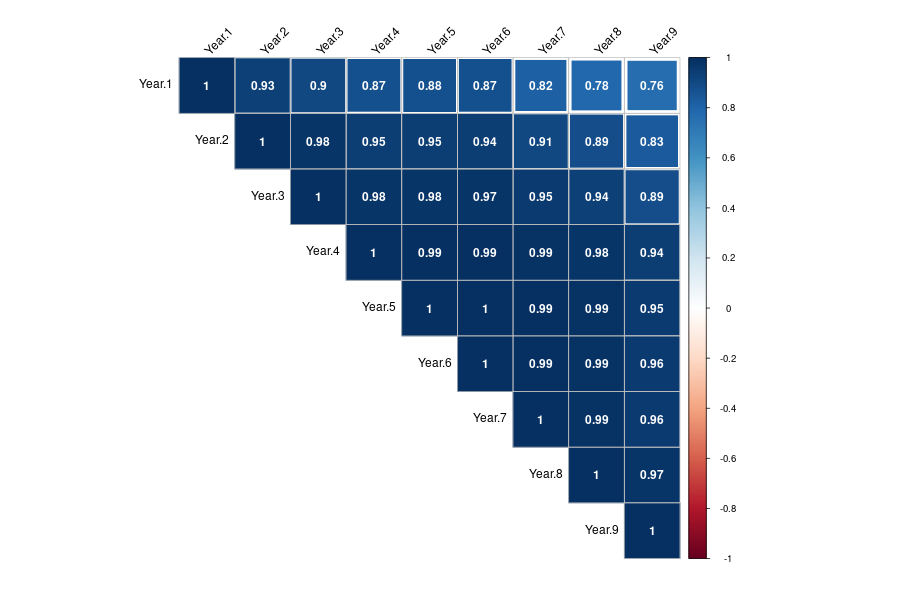}\caption{ ~(color online). Correlation between different nearby Calender year contracts implied by the data. }\label{Fig_corr}
\end{center}
\end{figure}

\clearpage

\bibliographystyle{plainnat}
\bibliography{Bibliographie}

\begin{thebibliography}{50}
\providecommand{\natexlab}[1]{#1}
\providecommand{\url}[1]{\texttt{#1}}
\expandafter\ifx\csname urlstyle\endcsname\relax
  \providecommand{\doi}[1]{doi: #1}\else
  \providecommand{\doi}{doi: \begingroup \urlstyle{rm}\Url}\fi

\bibitem[Ackerer and Filipovi{\'c}(2016)]{ackerer2016linear}
Damien Ackerer and Damir Filipovi{\'c}.
\newblock Linear credit risk models.
\newblock \emph{arXiv preprint arXiv:1605.07419}, 2016.

\bibitem[Ackerer and Filipovic(2017)]{ackerer2017option}
Damien Ackerer and Damir Filipovic.
\newblock Option pricing with orthogonal polynomial expansions.
\newblock \emph{arXiv preprint arXiv:1711.09193}, 2017.

\bibitem[Ackerer et~al.(2018)Ackerer, Filipovi{\'c}, and
  Pulido]{ackerer2018jacobi}
Damien Ackerer, Damir Filipovi{\'c}, and Sergio Pulido.
\newblock The jacobi stochastic volatility model.
\newblock \emph{Finance and Stochastics}, 22\penalty0 (3):\penalty0 667--700,
  2018.

\bibitem[Ardia et~al.(2011{\natexlab{a}})Ardia, Arango, and
  Gomez]{DEoptim2011c}
David Ardia, Juan~Ospina Arango, and Norman~Giraldo Gomez.
\newblock {J}ump-diffusion calibration using {D}ifferential {E}volution.
\newblock \emph{Wilmott Magazine}, 55:\penalty0 76--79, 2011{\natexlab{a}}.
\newblock URL \url{http://www.wilmott.com/}.

\bibitem[Ardia et~al.(2011{\natexlab{b}})Ardia, Boudt, Carl, Mullen, and
  Peterson]{DEoptim2011b}
David Ardia, Kris Boudt, Peter Carl, Katharine~M. Mullen, and Brian~G.
  Peterson.
\newblock {D}ifferential {E}volution with {DEoptim}: An application to
  non-convex portfolio optimization.
\newblock \emph{The R Journal}, 3\penalty0 (1):\penalty0 27--34,
  2011{\natexlab{b}}.
\newblock URL
  \url{https://journal.r-project.org/archive/2011-1/RJournal_2011-1_Ardia~et~al.pdf}.

\bibitem[Ardia et~al.(2016)Ardia, Mullen, Peterson, and Ulrich]{DEoptim2016}
David Ardia, Katharine~M. Mullen, Brian~G. Peterson, and Joshua Ulrich.
\newblock \emph{{DEoptim}: Differential Evolution in {R}}, 2016.
\newblock URL \url{https://CRAN.R-project.org/package=DEoptim}.
\newblock version 2.2-4.

\bibitem[Benth and Meyer-Brandis(2009)]{benth2009information}
Fred~Espen Benth and Thilo Meyer-Brandis.
\newblock The information premium for non-storable commodities.
\newblock \emph{Journal of Energy Markets}, 2\penalty0 (3):\penalty0 111--140,
  2009.

\bibitem[Benth and Ortiz-Latorre(2014)]{benth2014pricing}
Fred~Espen Benth and Salvador Ortiz-Latorre.
\newblock A pricing measure to explain the risk premium in power markets.
\newblock \emph{SIAM Journal on Financial Mathematics}, 5\penalty0
  (1):\penalty0 685--728, 2014.

\bibitem[Benth et~al.(2007{\natexlab{a}})Benth, Kallsen, and
  Meyer-Brandis]{benth2007non}
Fred~Espen Benth, Jan Kallsen, and Thilo Meyer-Brandis.
\newblock A non-gaussian ornstein--uhlenbeck process for electricity spot price
  modeling and derivatives pricing.
\newblock \emph{Applied Mathematical Finance}, 14\penalty0 (2):\penalty0
  153--169, 2007{\natexlab{a}}.

\bibitem[Benth et~al.(2007{\natexlab{b}})Benth, Koekebakker, and
  Ollmar]{benth2007extracting}
Fred~Espen Benth, Steen Koekebakker, and Fridthjof Ollmar.
\newblock Extracting and applying smooth forward curves from average-based
  commodity contracts with seasonal variation.
\newblock \emph{Journal of Derivatives}, 15\penalty0 (1):\penalty0 52,
  2007{\natexlab{b}}.

\bibitem[Benth et~al.(2008{\natexlab{a}})Benth, Benth, and
  Koekebakker]{benth2008stochastic}
Fred~Espen Benth, Jurate~Saltyte Benth, and Steen Koekebakker.
\newblock \emph{Stochastic modelling of electricity and related markets},
  volume~11.
\newblock World Scientific, 2008{\natexlab{a}}.

\bibitem[Benth et~al.(2008{\natexlab{b}})Benth, Cartea, and
  Kiesel]{benth2008pricing}
Fred~Espen Benth, {\'A}lvaro Cartea, and R{\"u}diger Kiesel.
\newblock Pricing forward contracts in power markets by the certainty
  equivalence principle: explaining the sign of the market risk premium.
\newblock \emph{Journal of Banking \& Finance}, 32\penalty0 (10):\penalty0
  2006--2021, 2008{\natexlab{b}}.

\bibitem[Benth et~al.(2012)Benth, Kiesel, and Nazarova]{benth2012critical}
Fred~Espen Benth, R{\"u}diger Kiesel, and Anna Nazarova.
\newblock A critical empirical study of three electricity spot price models.
\newblock \emph{Energy Economics}, 34\penalty0 (5):\penalty0 1589--1616, 2012.

\bibitem[Benth et~al.(2019)Benth, Piccirilli, and Vargiolu]{benth2019mean}
Fred~Espen Benth, Marco Piccirilli, and Tiziano Vargiolu.
\newblock Mean-reverting additive energy forward curves in a
  heath--jarrow--morton framework.
\newblock \emph{Mathematics and Financial Economics}, 13\penalty0 (4):\penalty0
  543--577, 2019.

\bibitem[Biagini and Zhang(2016)]{biagini2016polynomial}
Francesca Biagini and Yinglin Zhang.
\newblock Polynomial diffusion models for life insurance liabilities.
\newblock \emph{Insurance: Mathematics and Economics}, 71:\penalty0 114--129,
  2016.

\bibitem[Bunn and Chen(2013)]{bunn2013forward}
Derek~W Bunn and Dipeng Chen.
\newblock The forward premium in electricity futures.
\newblock \emph{Journal of Empirical Finance}, 23:\penalty0 173--186, 2013.

\bibitem[Carmona and Coulon(2014)]{carmona2014survey}
Ren{\'e} Carmona and Michael Coulon.
\newblock A survey of commodity markets and structural models for electricity
  prices.
\newblock In \emph{Quantitative Energy Finance}, pages 41--83. Springer, 2014.

\bibitem[Christodoulou et~al.(2018)Christodoulou, Detering, and
  Meyer-Brandis]{christodoulou2018local}
Panagiotis Christodoulou, Nils Detering, and Thilo Meyer-Brandis.
\newblock Local risk-minimization with multiple assets under illiquidity with
  applications in energy markets.
\newblock \emph{International Journal of Theoretical and Applied Finance},
  2018.

\bibitem[Cuchiero(2018)]{cuchiero2018polynomial}
Christa Cuchiero.
\newblock Polynomial processes in stochastic portfolio theory.
\newblock \emph{Stochastic Processes and their Applications}, 2018.

\bibitem[Cuchiero et~al.(2012)Cuchiero, Keller-Ressel, and
  Teichmann]{cuchiero2012polynomial}
Christa Cuchiero, Martin Keller-Ressel, and Josef Teichmann.
\newblock Polynomial processes and their applications to mathematical finance.
\newblock \emph{Finance and Stochastics}, 16\penalty0 (4):\penalty0 711--740,
  2012.

\bibitem[Delbaen and Shirakawa(2002)]{delbaen2002interest}
Freddy Delbaen and Hiroshi Shirakawa.
\newblock An interest rate model with upper and lower bounds.
\newblock \emph{Asia-Pacific Financial Markets}, 9\penalty0 (3-4):\penalty0
  191--209, 2002.

\bibitem[Escribano et~al.(2011)Escribano, Ignacio~Pena, and
  Villaplana]{escribano2011modelling}
Alvaro Escribano, J~Ignacio~Pena, and Pablo Villaplana.
\newblock Modelling electricity prices: International evidence.
\newblock \emph{Oxford bulletin of economics and statistics}, 73\penalty0
  (5):\penalty0 622--650, 2011.

\bibitem[Filipovi{\'{c}} and Larsson(2016)]{Larsson2016}
Damir Filipovi{\'{c}} and Martin Larsson.
\newblock Polynomial diffusions and applications in finance.
\newblock \emph{Finance and Stochastics}, 20\penalty0 (4):\penalty0 931--972,
  Oct 2016.
\newblock ISSN 1432-1122.
\newblock \doi{10.1007/s00780-016-0304-4}.

\bibitem[Filipovi{\'c} and Larsson(2019)]{filipovic2019polynomial}
Damir Filipovi{\'c} and Martin Larsson.
\newblock Polynomial jump-diffusion models.
\newblock \emph{Swiss Finance Institute Research Paper}, \penalty0 (17-60),
  2019.

\bibitem[Filipovi{\'c} and Willems(2018)]{filipovic2018term}
Damir Filipovi{\'c} and Sander Willems.
\newblock A term structure model for dividends and interest rates.
\newblock 2018.

\bibitem[Filipovi{\'c} et~al.(2016)Filipovi{\'c}, Gourier, and
  Mancini]{filipovic2016quadratic}
Damir Filipovi{\'c}, Elise Gourier, and Loriano Mancini.
\newblock Quadratic variance swap models.
\newblock \emph{Journal of Financial Economics}, 119\penalty0 (1):\penalty0
  44--68, 2016.

\bibitem[Filipovi{\'c} et~al.(2017)Filipovi{\'c}, Larsson, and
  Trolle]{filipovic2017linear}
Damir Filipovi{\'c}, Martin Larsson, and Anders~B Trolle.
\newblock Linear-rational term structure models.
\newblock \emph{The Journal of Finance}, 72\penalty0 (2):\penalty0 655--704,
  2017.

\bibitem[Filipovi{\'c} et~al.(2018)Filipovi{\'c}, Larsson, and
  Ware]{filipovic2018polynomial}
Damir Filipovi{\'c}, Martin Larsson, and Tony Ware.
\newblock Polynomial processes for power prices.
\newblock 2018.

\bibitem[F{\"o}llmer and Schweizer(1991)]{follmer1991hedging}
H~F{\"o}llmer and Martin Schweizer.
\newblock Hedging of contingent claims.
\newblock \emph{Applied stochastic analysis}, 5:\penalty0 389, 1991.

\bibitem[Follmer and Sondermann(1986)]{follmer1986contributions}
Hans Follmer and Dieter Sondermann.
\newblock Contributions to mathematical economics.
\newblock \emph{North Holland}, 1986.

\bibitem[Genoese et~al.(2010)Genoese, Genoese, and
  Wietschel]{genoese2010occurrence}
Fabio Genoese, Massimo Genoese, and Martin Wietschel.
\newblock Occurrence of negative prices on the german spot market for
  electricity and their influence on balancing power markets.
\newblock In \emph{2010 7th International Conference on the European Energy
  Market}, pages 1--6. IEEE, 2010.

\bibitem[Glasserman(2001)]{glasserman2001shortfall}
Paul Glasserman.
\newblock Shortfall risk in long-term hedging with short-term futures
  contracts.
\newblock \emph{Option Pricing, Interest Rates and Risk management}, pages
  477--508, 2001.

\bibitem[Handika et~al.(2012)Handika, Tr{\"u}ck,
  et~al.]{handika2012relationship}
Rangga Handika, Stefan Tr{\"u}ck, et~al.
\newblock The relationship between spot and futures prices: An empirical
  analysis of australian electricity markets.
\newblock In \emph{3rd IAEE Asian Conference}, 2012.

\bibitem[Heath et~al.(1999)Heath, Platen, and Schweizer]{heath1999}
David Heath, Eckhard Platen, and Martin Schweizer.
\newblock \emph{Numerical comparison of local risk-minimisation and
  mean-variance hedging}.
\newblock Australian National University, Centre for Mathematics and its
  Applications, School of Mathematical Sciences, 1999.

\bibitem[Heath et~al.(2001)Heath, Platen, and Schweizer]{heath2001comparison}
David Heath, Eckhard Platen, and Martin Schweizer.
\newblock A comparison of two quadratic approaches to hedging in incomplete
  markets.
\newblock \emph{Mathematical Finance}, 11\penalty0 (4):\penalty0 385--413,
  2001.

\bibitem[Kallsen and Muhle-Karbe(2010)]{kallsen2010exponentially}
Jan Kallsen and Johannes Muhle-Karbe.
\newblock Exponentially affine martingales, affine measure changes and
  exponential moments of affine processes.
\newblock \emph{Stochastic Processes and their Applications}, 120\penalty0
  (2):\penalty0 163--181, 2010.

\bibitem[Karatzas and Shreve(1998)]{karatzas1998brownian}
Ioannis Karatzas and Steven~E Shreve.
\newblock \emph{Brownian Motion and Stochastic Calculus}.
\newblock Springer, 1998.

\bibitem[Koekebakker and Ollmar(2005)]{koekebakker2005forward}
Steen Koekebakker and Fridthjof Ollmar.
\newblock Forward curve dynamics in the nordic electricity market.
\newblock \emph{Managerial Finance}, 31\penalty0 (6):\penalty0 73--94, 2005.

\bibitem[Kre{\v{c}}ar et~al.(2019)Kre{\v{c}}ar, Benth, and
  Gubina]{krevcar2019towards}
Nikola Kre{\v{c}}ar, Fred~E Benth, and Andrej~F Gubina.
\newblock Towards definition of the risk premium function.
\newblock \emph{IEEE Transactions on Power Systems}, 2019.

\bibitem[Monfort et~al.(2015)Monfort, Renne, and
  Roussellet]{monfort2015quadratic}
Alain Monfort, Jean-Paul Renne, and Guillaume Roussellet.
\newblock A quadratic kalman filter.
\newblock \emph{Journal of Econometrics}, 187\penalty0 (1):\penalty0 43--56,
  2015.

\bibitem[Mullen et~al.(2011)Mullen, Ardia, Gil, Windover, and
  Cline]{DEoptim2011a}
Katharine Mullen, David Ardia, David Gil, Donald Windover, and James Cline.
\newblock {DEoptim}: An {R} package for global optimization by differential
  evolution.
\newblock \emph{Journal of Statistical Software}, 40\penalty0 (6):\penalty0
  1--26, 2011.
\newblock URL \url{http://www.jstatsoft.org/v40/i06/}.

\bibitem[Neuberger(1999)]{neuberger1999hedging}
Anthony Neuberger.
\newblock Hedging long-term exposures with multiple short-term futures
  contracts.
\newblock \emph{The Review of Financial Studies}, 12\penalty0 (3):\penalty0
  429--459, 1999.

\bibitem[Price et~al.(2006)Price, Storn, and Lampinen]{DEoptim2006}
Kenneth~V. Price, Rainer~M. Storn, and Jouni~A. Lampinen.
\newblock \emph{{D}ifferential {E}volution - A Practical Approach to Global
  Optimization}.
\newblock Natural Computing. Springer-Verlag, January 2006.
\newblock ISBN 540209506.

\bibitem[Revuz and Yor(2013)]{revuz2013continuous}
Daniel Revuz and Marc Yor.
\newblock \emph{Continuous martingales and Brownian motion}, volume 293.
\newblock Springer Science \& Business Media, 2013.

\bibitem[Schweizer(1990)]{schweizer1990risk}
Martin Schweizer.
\newblock Risk-minimality and orthogonality of martingales.
\newblock \emph{Stochastics: An International Journal of Probability and
  Stochastic Processes}, 30\penalty0 (2):\penalty0 123--131, 1990.

\bibitem[Schweizer(1999)]{schweizer1999guided}
Martin Schweizer.
\newblock A guided tour through quadratic hedging approaches.
\newblock Technical report, Discussion Papers, Interdisciplinary Research
  Project 373: Quantification and Simulation of Economic Processes, 1999.

\bibitem[Storn and Price(1997)]{Storn1997}
Rainer Storn and Kenneth Price.
\newblock Differential evolution -- a simple and efficient heuristic for global
  optimization over continuous spaces.
\newblock \emph{Journal of Global Optimization}, 11\penalty0 (4):\penalty0
  341--359, Dec 1997.
\newblock ISSN 1573-2916.
\newblock \doi{10.1023/A:1008202821328}.
\newblock URL \url{https://doi.org/10.1023/A:1008202821328}.

\bibitem[Valitov(2019)]{valitov2019risk}
Niyaz Valitov.
\newblock Risk premia in the german day-ahead electricity market revisited: The
  impact of negative prices.
\newblock \emph{Energy Economics}, 82:\penalty0 70--77, 2019.

\bibitem[Viehmann(2011)]{viehmann2011risk}
Johannes Viehmann.
\newblock Risk premiums in the german day-ahead electricity market.
\newblock \emph{Energy policy}, 39\penalty0 (1):\penalty0 386--394, 2011.

\bibitem[Weron(2008)]{weron2008market}
Rafa{\l} Weron.
\newblock Market price of risk implied by asian-style electricity options and
  futures.
\newblock \emph{Energy Economics}, 30\penalty0 (3):\penalty0 1098--1115, 2008.

\end{thebibliography}

\end{document}